\newcommand*\NoIndentAfterEnv[1]{%
  \AfterEndEnvironment{#1}{\par\@afterindentfalse\@afterheading}}
\definecolor{figuresmain}{RGB}{200,200,200}
\definecolor{figuretext}{RGB}{0,0,0}
\newcommand{\llbar}{\{\kern-0.5ex|}
\newcommand{\rrbar}{|\kern-0.5ex\}}
\newtheoremstyle{customstyle}%
  {3pt}%
  {15pt}%
  {\normalfont}%
  {0pt}%
  {\bf}%
  {.}%
  { }%
  {}%
\theoremstyle{customstyle}
\else \newtheorem{definition}{Definition} \fi
\else \newtheorem{lemma}{Lemma} \fi
\else \newtheorem{corollary}{Corollary} \fi
\else \newtheorem{example}{Example}[] \fi
\newtheorem*{theorem*}{Theorem}
\newtheorem*{lemma*}{Lemma}
\newcommand*{\ttfamilywithbold}{\fontencoding{T1}\fontfamily{pcr}\selectfont}
\lstdefinelanguage{mungo}{
  	comment=[l]{//},
    keywords = [1]{enum, class, switch, if, new, true, false, continue, loop, unit, return, infer, else, string, int, none, fun, val, this, rec, end},
    keywords = [2]{},
    keywords = [3]{double, int, bool, void},
    morekeywords=[3]{BankAccount, DataStorage, SalaryManager}
}
\lstdefinelanguage{scalaprotocol}{
  	comment=[l]{//},
    keywords = [1]{object, extends, with, class, var, def, val},
    keywords = [2]{goto, when, in, end, @Typestate},
    keywords = [3]{App, String, Double, Int, Float, Unit, Float},
    morekeywords={[3]{BankAccountProtocol, ProtocolLang, BankAccount, Instance, Alias, SalaryManager, DataStorage, Demonstration}},
    morestring=[b]"
}
\lstdefinelanguage{plaid}{
  	morecomment=[l][\itshape]{//},
    keywords = {val, state, new, with, method, this, case, of}
}
\definecolor{cyan}{HTML}{00afaf}
\definecolor{blue}{HTML}{0087ff}
\definecolor{green}{HTML}{859900}
\definecolor{base01}{HTML}{586e75}
\definecolor{base00}{HTML}{657b83}
\definecolor{orange}{HTML}{cb4b16}
\lstdefinestyle{color}{
    keywordstyle=[1]\color{base01}\bfseries,
    keywordstyle=[2]\color{green},
    keywordstyle=[3]\color{blue},
    stringstyle=\color{orange},
    commentstyle=\itshape\color{orange},
    basicstyle=\linespread{1.15}\ttfamilywithbold\color{base00},
    belowskip=0pt,
    belowcaptionskip=1em,
    abovecaptionskip=1em,
    literate = {-}{-}1,
    numbers=left,
    xleftmargin=2em,
    breaklines=true,
}
\renewcommand{\>}{\rangle}
\newcommand{\<}{\langle}
\newcommand{\trans}[1][]{\xrightarrow{#1\phantom{}}}
\newcommand{\Trans}[1][]{\xRightarrow{#1\phantom{}}}
\renewcommand{\ss}[1][]{\Trans[#1]}
\renewcommand{\sf}[1]{\textsf{\textup{#1}}}
\newcommand{\rt}[1]{\textsc{(#1)}}
\newcommand{\eqdef}{\triangleq}
\newcommand{\dom}[1]{\text{dom}(#1)}
\def\U{\mathcal{U}}
\newcommand{\substitute}[2]{\ensuremath{\{#1/#2\}}}
\newcommand{\sif}[4][]{\sf{if}_{#1}\ \allowbreak(#2)\ \allowbreak\{#3\}\ \sf{else}\ \allowbreak\{#4\}}
\newcommand{\runtime}[1]{\colorbox{lightgray}{\ensuremath{#1}}}
\newcommand{\ctx}{\ensuremath{\mathcal{E}}}
\newcommand{\uend}{\ensuremath{{\sf{end}}}\xspace}
\newcommand{\case}[1]{\noindent\textbf{Case} #1\textbf{:}}
\def\widegather{
    \patchcmd\start@gather{$$}{%
      $$%
      \displaywidth=\textwidth
      \displayindent=-\leftskip
    }{}{\errmessage{Cannot patch \string\start@gather}}
}
\newenvironment{rules}
        {\begingroup
        \setlength{\jot}{2mm}%
        \widegather
        \csname gather*\endcsname
    }
    {
    \csname endgather*\endcsname
    \endgroup
    }
\newcommand{\usage}[1]{
    \begingroup
        \ensuremath{#1}%
    \endgroup
}
\def\@inferenceFrontName[#1]{%
  \setbox3=\hbox{\normalfont #1}%
  \ifdim \wd3 > \z@
    \unhbox3%
    \hskip\@@nSpace
  \fi
  \@inferenceMiddle
}
\title{Papaya: Global Typestate Analysis of Aliased Objects Extended Version}
\author{Mathias Jakobsen}
\affiliation{%
\institution{University of Glasgow}
\department{School of Computing Science}
\country{United Kingdom}}
\email{m.jakobsen.1@research.gla.ac.uk}
\author{Alice Ravier}
\affiliation{%
\institution{University of Glasgow}
\department{School of Computing Science}
\country{United Kingdom}}
\email{2206245r@student.gla.ac.uk}
\author{Ornela Dardha}
\affiliation{%
\institution{University of Glasgow}
\department{School of Computing Science}
\country{United Kingdom}}
\email{ornela.dardha@glasgow.ac.uk}
\begin{document}
\begin{abstract}
\emph{Typestates} are state machines used in object-oriented programming to specify and verify correct order of method calls on an object. To avoid inconsistent object states, typestates enforce \emph{linear typing}, which eliminates---or at best limits---\emph{aliasing}.  However, aliasing is an important feature in programming, and the state-of-the-art on typestates is too restrictive if we want typestates to be adopted in real-world software systems.

In this paper, we present a type system for an object-oriented language with typestate annotations, which allows for \emph{unrestricted} aliasing, and as opposed to previous approaches it does not require linearity constraints. The typestate analysis is \emph{global} and tracks objects throughout the entire program graph, which ensures that well-typed programs \emph{conform} and \emph{complete} the declared protocols. We implement our framework in the Scala programming language and illustrate our approach using a running example that shows the interplay between typestates and aliases.
\end{abstract}

\maketitle

\section{Introduction}
In class-based object-oriented programming languages, a class defines a number of methods that can be invoked on an object of that class. Often, however, there is an implicit \emph{order} imposed on methods, where some methods should be called before others. For example, a server connection must be opened before sending data, or we might want a clean-up method to be called before freeing resources. These method orderings, or \textit{protocols}, are often defined in varying degrees of formality through documentation or comments, which makes the process difficult and error-prone. Work has been undertaken to include these protocols in the program itself with the introduction of \emph{typestates} for object-oriented languages \cite{Kouzapas2016TypecheckingStMungo, BravettiBehaviouralTypes2020, DeLine2004, Aldrich2009Typestate-orientedProgramming}. 

Common to many of these approaches is that they rely on a \emph{linear} type system, where only a single reference to an object can exist, thus eliminating--or limiting--aliasing.
In Mungo \cite{Kouzapas2016TypecheckingStMungo, BravettiBehaviouralTypes2020} linearity is always enforced, whereas other approaches in languages such as Plaid \cite{Bierhoff2007} and Vault \cite{DeLine2001EnforcingSoftware} allow limited aliasing, while preserving compositionality of the type system such that each class can be type checked in isolation \cite{Fahndrich2002, Militao2010}. These approaches often require programmer annotations to deal with aliasing control or they simply eliminate aliasing altogether.

The difficulty with aliasing in the presence of typestates is that if multiple references exist to a single object, then operations on one object reference can affect the type of multiple other references as well. This is further complicated if we allow aliases to be stored in fields on multiple objects. Consequently, operations on objects of one class impact the well-typedness of other classes, potentially leading to inconsistent objects' states.
Looking at the problem from a more `technical' angle, the difficulty with aliasing in the presence of typestates is due to the discrepancy between the \emph{compositional} nature of typestate-based type systems and the \emph{global} nature of aliasing. To address aliasing one can either (i) allow limited access to an object through aliasing control mechanisms or (ii) if we want unrestricted aliasing then use a form of global analysis. The problem with (i) is that it not trivial to find an alias control mechanism to capture OO programming idioms, and for (ii) while we benefit from the most flexible form of aliasing, the drawback is that we lose compositionality. With the above in mind we pose our research question:

\textbf{RQ}:
\emph{Can we define a typestate-based type system for object-oriented languages that guarantees protocol conformance and completion while allowing unrestricted aliasing?}

In this paper, we answer positively our research question and introduce a global approach to type checking object-oriented programs with typestates, which allows \textit{unrestricted aliasing}, meaning that objects can be freely aliased, and stored in fields of other objects. This is more representative of the sort of aliasing that can occur in real-world programs. In this work we treat typestates in a similar fashion to the line of work on Mungo \cite{Kouzapas2016TypecheckingStMungo, BravettiBehaviouralTypes2020}
and along the same lines, we introduce Papaya, an implementation of a typestate-based type system for Scala.

\paragraph{Contributions} The contributions of this paper are as follows.
\begin{itemize}
\item \textbf{Typestates for Aliased Objects}. We formalise an object-oriented language with typestate annotations.
\begin{itemize}
\item \cref{sec:language} presents the syntax; \cref{sec:typesystem} presents the type system that performs global typestate analysis of unrestricted aliased objects and \cref{sec:semantics} presents the operational semantics. 
\item  \cref{sec:properties} covers the meta-theory of our formalisation and we show that our type system is safe by proving subject reduction (\cref{thm:subject-reduction}), progress (\cref{thm:progress}), protocol conformance (\cref{cor:usage_conformance}) and protocol completion (\cref{lemma:protocol_completion}).
\end{itemize}

\item \textbf{Papaya Tool}. \cref{sec:implementation} presents the Papaya tool, an implementation of our type system for Scala. Protocols are expressed as Scala objects and are added to Scala classes with the \lstinline{@Typestate} annotation. Following the formalisation, our implementation allows for unrestricted aliasing, where objects are checked if they conform and complete their declared protocols.

\item \textbf{The BankAccount Example.} We illustrate our work with a running example (starting in \cref{sec:overview}), which features aliasing. We show how the program is typed in our type system (from \cref{sec:typesystem}) and we implement it in Scala (in \cref{sec:implementation}) where use Papaya to perform typestate checking.
\end{itemize}
In \cref{sec:relatedwork} we discuss related work on typestates and aliasing. Finally, in \cref{sec:conclusion} we conclude the paper and present ideas for future work.

\section{Overview}
\label{sec:overview}

We introduce our approach with an example, which is inspired by \cite{Jakobsen2020}. The example is shown using the calculus that will be defined in \cref{sec:language} with the addition of some base types and operations on those. For completeness, since the calculus requires a formal parameter for all methods, one could pass the \sf{unit} value as an argument. For readability we omit the argument instead.

Consider the class \lstinline{BankAccount} shown in \cref{lst:bankaccount}. It is a simple wrapper class around a field storing an amount of money. Notice that there is an implicit ordering of method calls, which the programmer might assume will be followed when using the class: the amount of money should be set \emph{prior} to using the value of the field, and interest should be applied \emph{after} setting the money; finally, the money variable should only be read \emph{after both} setting the money and applying the interest has occurred, so that an intermediate value is not returned.

\begin{lstlisting}[label={lst:bankaccount}, caption={Wrapper class around an amount of money}]
class BankAccount[{setMoney; 
                  {applyInterest; 
                  {getMoney; end}}}] {
  val amount:float;
  fun setMoney(d:float):void (*@\label{lst:bank-account:setMoney}@*) {
    this.amount = d; 
  }
  fun getMoney():float (*@\label{lst:bank-account:getMoney}@*){
    this.amount;
  }
  fun applyInterest(rate:float) (*@\label{lst:bank-account:applyInterest}@*){
    this.amount = this.amount * rate;
  }
}
\end{lstlisting}

We can express this implicit order of method calls as an explicit \textit{usage}:
\[ 
\usage{\{\texttt{setMoney};\allowbreak\{\texttt{applyInterest};\allowbreak\{\texttt{getMoney};\allowbreak \uend\}\}\}}
\]
where $\{m_i; w_i\}_{i \in I}$ denotes that a method $m_j$ where $j \in I$ can be called, with the continuation usage $w_j$.
This usage states that the first method called should be \texttt{setMoney}, followed by a call to  \texttt{applyInterest} and finally one to \texttt{getMoney}.

We introduce two additional classes as shown in \cref{lst:salary,lst:datastore}. The \lstinline{SalaryManager} class adds money to a \lstinline{BankAccount} and applies a fixed interest rate. The \lstinline{DataStorage} class fetches the value of a \lstinline{BankAccount} and stores it in a database.

\begin{lstlisting}[firstnumber=15, label={lst:salary}, caption={Salary manager that adds funds to a \lstinline{BankAccount} object}]
class SalaryManager[{setAccount; 
                    {addSalary; end}}] {
  val account:BankAccount
  fun setAccount(ms:BankAccount):void {
    this.account = ms;
  }
  fun addSalary(amount:float) {
    this.account.setMoney(amount);
    this.account.applyInterest(1.05);
  }
}
\end{lstlisting}
\begin{lstlisting}[firstnumber=26, label={lst:datastore}, caption={Data storage class that reads the funds of a \lstinline{BankAccount} object}]
class DataStorage[{setAccount; 
                  {store; end}}] {
  val account:BankAccount
  fun setAccount(ms:BankAccount):void {
    this.account = ms;
  }
  fun store() {
    this.account.getMoney();
    // store value in database
  }
}
\end{lstlisting}

Note that in the three classes we defined so far, there is no explicit mentioning of possible aliasing. However, as shown in \cref{lst:aliasing-bankaccount}, an instance of class \lstinline{BankAccount} can be aliased and shared between the manager and data store, as long as the joined operations on the instance respect its usage. 

\begin{lstlisting}[firstnumber=37, label={lst:aliasing-bankaccount}, caption={Aliasing of a \lstinline{BankAccount} object}]
account = new BankAccount;
manager = new SalaryManager;
db = new DataStorage;

manager.setAccount(account);
db.setAccount(account);

manager.addSalary(100.0); (*@\label{lst:aliasing-bankaccount:1}@*)
db.store();(*@\label{lst:aliasing-bankaccount:2}@*)
\end{lstlisting}

If we were to swap lines \ref{lst:aliasing-bankaccount:1} and \ref{lst:aliasing-bankaccount:2}, then they would no longer follow the protocol, as the data store would call \texttt{getMoney} before \texttt{setMoney} and \texttt{applyInterest} were called.

\section{The Language}
\label{sec:language}
We introduce an object-oriented calculus with classes and enumeration types, similar to previous work on Mungo \cite{Kouzapas2016TypecheckingStMungo, Kouzapas2018TypecheckingJava, Dardhaetal2017, BravettiBehaviouralTypes2020, VoineaDG20}.

The syntax of terms is shown in \cref{fig:language}. For a sequence $\phi_1, \phi_2, \ldots \phi_n$ we write $\overline{\phi}$ and let $|\overline{\phi}|=n$. A program is a list of class and enum-definitions $\overline{D}$, followed by a class \sf{Main} which contains the \sf{main} method. A class definition contains the initial protocol, or usage $\U$, field declarations $\overline{F}$ and method declarations $\overline{M}$. For expressions, the language supports assignment, object initialisation, method calls (on fields, parameters or on the object itself).
Note that for simplicity and readability of typing rules later on, method calls and field access use an object-reference $o$ as the target, thus call-chaining and nested field access is not allowed. However, the language can be easily extended to facilitate these features, requiring an extra object look-up in the relevant typing rules. The only object reference that can occur in program text is the $\sf{this}$ reference.
The language also supports control structures (conditionals, loops, and sequential composition) and match expressions (switch on an enumeration type). Loops are formalised with a jump-style loop with labelled expressions and \sf{continue} statements in line with Mungo work. 

\begin{figure}[htpb]
    \centering
    \begin{subfigure}[b]{0.96\columnwidth}
        \begin{abstractsyntax*}
            D ::=&\ \sf{class}\ C \{\U, \overline{F}, \overline{M}\} \mid \sf{enum}\ L \{\overline{l}\} \\
            F ::=&\ \sf{val}\ f : t \\
            M ::=&\ \sf{fun}\ m(x : t) : t\ \{e\} \\
            r ::=&\ o \mid o.f \mid x\\
            e ::=&\ o.f = e \mid o.f = \sf{new}\ C \mid e;e \mid r.m(e) \mid \sf{unit} \mid o.f \mid x \\ 
            \mid&\ \sif{e}{e}{e} \mid o.l \mid \sf{match} (e) \{ \overline{l : e}\} \mid \sf{null}\\ 
            \mid&\ \sf{true} \mid \sf{false} \mid k : e \mid \sf{continue}\ k 
        \end{abstractsyntax*}
        \caption{Syntax of class definitions}
        \label{fig:language}
    \end{subfigure}
    \begin{subfigure}[b]{0.96\columnwidth}
        \begin{abstractsyntax*}
            t ::=&\ C \mid \sf{void} \mid \sf{bool} \mid L\\
            \runtime{T} ::=&\ \runtime{o[C, \U]} \mid \sf{void} \mid \runtime{\bot} \mid \sf{bool} \mid L \mid \runtime{L\ \sf{link}\ o}\\
            \U ::=&\ \mu X.\U \mid X \mid \{ \overline{m; w} \} \mid \uend \\
            w ::&\ \<\overline{l: \U}\> \mid \U
        \end{abstractsyntax*}
        \caption{Syntax of types}
        \label{fig:types}
    \end{subfigure}
    \caption{Syntax of terms and types}
    \label{fig:syntax}
\end{figure}

The syntax of types is shown in \cref{fig:types} and it contains the object types $o[C, \U]$, base types \sf{bool} and \sf{void}, the \textit{null-type} $\bot$, and enumeration types $L$ and $L\ \sf{link}\ o$. The shaded production rules indicate run-time syntax. An object type $o[C, \U]$ is composed of an object reference $o$, which is a unique identifier a single object, a class name $C$ and a current usage $\U$ describing the remaining protocol of the object. The enumeration type $L\ \sf{link}\ o$ introduced in \cite{VasconcelosGay2009DynamicInterfaces} is used to track updates in switch-statements and are not declared in the program text.

\cref{def:lts} presents a labelled transition system for usages, annotated with the method call or the enumeration label, depending on the action performed. If an object has type $o[C, \U]$, then the transitions of $\U$ describe the permitted operations on the object referenced by $o$. As previously described, branch usages $\{m_i; w_i\}_{i \in I}$ describe a set of available methods, each with a continutation usage. Choice usages $\<l_i : \U_i\>_{i \in I}$ describe that based on a enumeration label $l_j$, the protocol continues with protocol $\U_j$. Recursive behaviour can be specified with recursive usages $\mu X.\U$ and the \uend usage denotes the terminated protocol which has no transitions.

\begin{definition}[LTS on Usages]
\label{def:lts}
\begin{gather*}
    \inferrule{j \in I}{\{m_i; w_i\}_{i \in I} \trans[m_j] w_j} \;
    \inferrule{j \in I}{\<l_i : \U_i\>_{i \in I} \trans[l_j] \U_j} \\ 
    \inferrule{\U\substitute{X}{\mu X.\U} \trans \U'}{\mu X.\U \trans[\alpha] \U'}
\end{gather*}
\end{definition}

We define a notion of well-formedness for expressions (\cref{def:well-formedness}), which requires that \sf{continue} expressions do not show up in places where, after loop unfolding, they would be followed by other expressions. Examples of ill-formed expressions include $o.m(\sf{continue}\ k)$ and $\sf{continue}\ k;o.m(\sf{unit})$. Furthermore, well-form\-ed\-ness also requires a labelled expression has a terminating branch so that $k : \sif{\sf{true}}{\sf{continue}\ k}{\sf{unit}}$ is well-formed whereas $k : \sf{continue}\ \allowbreak k$ is not.

\begin{definition}[Well-formedness]
\label{def:well-formedness}
An expression $e$ is well-formed if:

\begin{enumerate}
    \item No expression follows a \sf{continue} expression after unfolding \sf{continue} expressions in $e$
    \item No free loop-variables in $e$
    \item All \sf{continue} expressions in $e$ are guarded by a branching (\sf{if} or \sf{match}) expression
    \item There must be a branch in all labelled expressions in $e$ that does not end with a \sf{continue} expression
\end{enumerate}
\end{definition}

We conclude with the definition of well-formed methods.

\begin{definition}[Well-formed methods]
\label{def:well-formed methods}
A method declaration $\sf{fun}\ \allowbreak m(x : t) : t\ \{e\}$ is well formed if $e$ is well formed and recursive calls are guarded by a branching expression.
\end{definition}

\section{Type System}
\label{sec:typesystem}

As opposed to previous type systems for Mungo \cite{Kouzapas2016TypecheckingStMungo, Kouzapas2018TypecheckingJava, BravettiBehaviouralTypes2020} the type system presented here performs a \emph{global} analysis of the program, in order to maintain a global view of aliasing while guaranteeing correct objects' states. This means that instead of relying on compositionality during type checking, we must explore the entire program graph. Consequently when a method call is encountered during type checking, the type system must ensure that the body of the method is well typed in the current typing environment.

We define a typing environment $\Gamma$ using the production rules shown in \cref{fig:typingenvironment}. A typing environment maps object references to an object-type as well as a field typing environment $\lambda$ that contains the types for all fields in the object. Furthermore we use the notation $\Gamma[o \mapsto (T, \lambda)]$ to indicate an update of an existing binding for object $o$, and $\Gamma[o.f \mapsto o']$ to update the existing binding of a field of object $o$. A typing environment can only contain a single binding for each object reference $o$. Similarly, a field typing environment can only contain a single binding for each field name.

\begin{figure}[htpb]
    \centering
    \begin{abstractsyntax*}
        \Gamma ::=\ &\emptyset \mid \Gamma, o \mapsto (T, \lambda) \\
        \lambda ::=\ &\emptyset \mid \lambda, f \mapsto z \\
        z ::=\ &\sf{basetype bool} \mid \sf{basetype void}\\ \mid\ &\sf{basetype}\ \bot \mid\ \sf{basetype}\ L \\ \mid\ &\sf{reference}\ o
    \end{abstractsyntax*}
    \caption{Syntax of typing environments}
    \label{fig:typingenvironment}
\end{figure}

We define the initial field environment given a set of field declarations $\overline{F}.\sf{inittypes}$. This is used when initialising new objects. Fields with class types are given the initial type of $\bot$ whereas fields of base types retain that type in the field environment.

\begin{align*}
    (\overline{F}, \sf{var}\ f : C).\sf{inittypes} &= \overline{F}.\sf{inittypes}, f \mapsto \sf{basetype}\ \bot \\
    (\overline{F}, \sf{var}\ f : \sf{bool}).\sf{inittypes} &= \overline{F}.\sf{inittypes}, f \mapsto \sf{basetype}\ \sf{bool} \\
    (\overline{F}, \sf{var}\ f : \sf{void}).\sf{inittypes} &= \overline{F}.\sf{inittypes}, f \mapsto \sf{basetype}\ \sf{void} \\ 
    (\overline{F}, \sf{var}\ f : L).\sf{inittypes} &= \overline{F}.\sf{inittypes}, f \mapsto \sf{basetype}\ L \\ 
    \emptyset.\sf{inittypes} &= \emptyset
\end{align*}

We also define the following shorthand functions for extracting information from the typing environment and class definitions.

\begin{equation*}
    \begin{aligned}[c]
        (o[C, \U], \lambda).\sf{class} &\eqdef C \\
        (o[C, \U], \lambda).\sf{usage} &\eqdef \U \\
        (o[C, \U], \lambda).\sf{reference} &\eqdef o \\
    \end{aligned}
    \qquad
    \begin{aligned}[c]
        (o[C, \U], \lambda).f &\eqdef \lambda(f) \\
        (T, \lambda).\sf{type} &\eqdef T \\
        (T, \lambda).\sf{fields} &\eqdef \lambda
    \end{aligned}
\end{equation*}

For a class name $C$ where $\sf{class}\ C\{\U, \overline{F}, \overline{M}\}\in \overline{D}$ we let $\overline{D}(C)=\sf{class}\ C\{\U, \overline{F}, \overline{M}\}$ and define the following functions.

\begin{align*}
    (\sf{class}\ C\{\U, \overline{F}, \overline{M}).\sf{usage} &\eqdef\U \\
    (\sf{class}\ C\{\U, \overline{F}, \overline{M}).\sf{fields} &\eqdef \overline{F} \\
    (\sf{class}\ C\{\U, \overline{F}, \overline{M}).\sf{methods} &\eqdef \overline{M}
\end{align*}

The type system is driven by the following (Main) rule, which states that if the main method is well typed, then the entire program is well typed. As previously mentioned, the type system will expand method calls, hence the type system will visit all reachable parts of the program. In the (Main) rule we require $\sf{term}(\Gamma)$ meaning that the resulting type environment must be \textit{terminated}, meaning that protocols must be finished for all objects. \sf{term} is defined as:
\[
    \sf{term}(\Gamma) \Leftrightarrow \forall o \in \text{dom}(\Gamma).\ \Gamma(o).\sf{usage} = \uend
\]

\begin{gather*}
    \inferrule[Main]{\sf{Main} \{\U, \overline{F}, \overline{M}\} \in \overline{D} \\ \overline{M} = \{\sf{fun}\ \sf{main}(\sf{void}\ x)\ \{e\}\} \\ \U = \{\sf{main}; \uend\} \\ \emptyset;\emptyset;\{o_{\sf{main}} \mapsto (\sf{Main}[\uend], \overline{F}.\sf{initvals})\} \vdash e: T \dashv \Gamma\\\sf{term}(\Gamma)}{\vdash \overline{D} : \sf{ok}}
\end{gather*}

Judgments for type checking expressions are of the form $\Theta;\Omega;\Gamma \vdash e : T \dashv \Gamma'$. The environments $\Omega$ and $\Theta$ are used to track labelled expressions and recursive method calls respectively. The label environment $\Omega$ relates loop labels $k$ to typing environments $\Gamma$ such that when encountering a \sf{continue} expression we can compare the current typing environment to the initial typing environment when entering the loop. This will be explained in detail later. The recursion environment $\Theta$ serves the same purpose but for recursive method calls instead. As method calls are expanded in the type systems, recursive method definitions will lead to infinite type checking if not handled carefully. By keeping track of the currently expanded methods, the type system can terminate type checking after a single expansion of each method.

Returning back to the format of judgments, $\Theta;\Omega;\Gamma \vdash e : T \dashv \Gamma'$, we can now describe the meaning of the judgment. Given initial environments $\Theta$, $\Omega$, $\Gamma$, evaluating the expression $e$ will result in a value of type $T$ and a possibly updated typing environment $\Gamma'$. We assume that $\overline{D}$ is globally available in the rules, and contains the class definitions of the program.

The first set of rules, found in \cref{fig:typing-objects}, describes object operations such as reading fields or parameters, assigning fields and object initialisation. They are for the most part standard, although the rules for method call require some further description.

The rules for direct method calls, which is used for typing method calls on parameters or the \sf{this} object, are defined in (Call-d) and (Call-d-rec). In (Call-d) the method body is expanded, and the current typing environment from before the unfolding is stored in the recursion environment $\Theta$. Upon reaching a recursive call inside the method body the (Call-d-rec) rule is used to compare the current typing environment to the one stored in $\Theta$. This enforces that upon making a recursive call, the typing environment should be the same as it was for the initial method call. If this is the case, then we can terminate type checking of the call-chain, as we have checked this exact configuration already with the initial call. So the combination of the two rules (Call-d) and (Call-d-rec) gives us a recursive typing behaviour, with the base case defined by (Call-d-rec). This exact behaviour is repeated for indirect calls which are used for fields, as illustrated in the rules (Call-ind) and (Call-ind-rec).

Four auxiliary functions are used in the rules:

\sf{agree} checks that a value of type $T$ matches the one defined in the program text as $t$. This allows \sf{null} to be written to fields with class types, and allows objects to be stored in fields with matching classes, no matter the particular protocol of the object.
\begin{gather*}
    \sf{agree}(C, \bot) \quad
    \sf{agree}(C, o[C, \U]) \quad
    \sf{agree}(\sf{bool}, \sf{bool}) \\
    \sf{agree}(\sf{void}, \sf{void}) \quad
    \sf{agree}(L, L)
\end{gather*}

The \sf{returns} predicate extends the \sf{agree} predicate with an option to return a link type from a method, to support switching on choice usages by linking the enumeration type to an object.

\[
    \sf{returns}(t, T) \Leftrightarrow \sf{agree}(t, T) \vee (t = L \wedge T = L\ \sf{link}\ o)
\]

\sf{getType} and \sf{vtype} are used for tagging and unpacking values for storing them in the typing environment. The reason we need this is to handle the indirection of an object reference $o$, so that we can look up the type of a field, with an extra access to the typing environment.

\begin{align*}
    \sf{getType}(\sf{reference}\ o, \Gamma) &= \Gamma(o).\sf{type} \\
    \sf{getType}(\sf{basetype}\ \sf{bool}, \Gamma) &= \sf{bool}\\
    \sf{getType}(\sf{basetype}\ \sf{void}, \Gamma) &= \sf{void}\\
    \sf{getType}(\sf{basetype}\ \bot, \Gamma) &= \bot\\
    \sf{getType}(\sf{basetype}\ L, \Gamma) &= L
\end{align*}

\begin{align*}
    \sf{vtype}(o[C, \U]) &= \sf{reference}\ o \\
    \sf{vtype}(\sf{bool}) &= \sf{basetype}\ \sf{bool} \\
    \sf{vtype}(\sf{void}) &= \sf{basetype}\ \sf{void} \\
    \sf{vtype}(\bot) &= \sf{basetype}\ \bot \\
    \sf{vtype}(L) &= \sf{basetype}\ L \\
\end{align*}

\begin{figure*}[htbp]
\begin{subfigure}[b]{0.96\textwidth}
    \hfill \fbox{$\Theta; \Omega; \Gamma \vdash e : T \dashv \Gamma'$}
    \begin{rules}
        \inferrule[Assign]{\Theta;\Omega;\Gamma \vdash e : T \dashv \Gamma'  \\ \Gamma'(o).\sf{class}.\sf{fields}(f) = \sf{var}\ f: t \\ \sf{agree}(t, T)}{\Theta;\Omega;\Gamma \vdash o.f = e : \sf{void} \dashv \Gamma'[o.f \mapsto \sf{vtype}(T)]} \quad
        \inferrule[Field]{\Gamma(o).\sf{fields}(f) = z \\ T = \sf{getType}(\Gamma, z)}{\Theta;\Omega;\Gamma \vdash o.f : T \dashv \Gamma} \\
        \inferrule[New]{o'\ \text{fresh}  \\ \overline{D}(C)=\sf{class}\ C \{\U, \overline{F}, \overline{M}\} \\ \sf{val}\ f : C \in \overline{D}(\Gamma'(o).\sf{class}).\sf{fields}}{\Theta;\Omega;\Gamma \vdash o.f = \sf{new}\ C : \sf{unit} \dashv (\Gamma, o' \mapsto (o'[C, \U], \overline{F}.\sf{inittypes}))[o.f\mapsto o']}  \quad
        \inferrule[Unit]{ }{\Theta;\Omega;\Gamma \vdash \sf{unit} : \sf{void} \dashv \Gamma} \\
        \inferrule[Bool]{v \in \{\sf{true}, \sf{false}\}}{\Theta;\Omega;\Gamma \vdash v : \sf{bool} \dashv \Gamma} \quad 
        \inferrule[Enum]{l \in L}{\Theta;\Omega;\Gamma \vdash o.l : L\ \sf{link}\ o\dashv \Gamma} \quad
        \inferrule[Null]{ }{\Theta;\Omega;\Gamma \vdash \sf{null} : \bot \dashv \Gamma} \quad
        \inferrule[Const]{l \in L}{\Theta;\Omega;\Gamma \vdash o.l : L \dashv \Gamma} \quad 
        \inferrule[Obj]{\Gamma(o) = (o[C, \U], \lambda)}{\Theta;\Omega;\Gamma \vdash o : o[C, \U] \dashv \Gamma}\\
        \inferrule[Call-d]{\Theta;\Omega;\Gamma \vdash e : T \dashv \Gamma'' \\ \Gamma''(o) = (o[C, \U], \lambda) \\ \U \trans[m] \U' \\\\ \sf{fun}\ m(x : t) : t' \{e'\} \in \overline{D}(C).\sf{methods} \\ \sf{agree}(t, T)\\\\
       (\Theta, o.m \mapsto \Gamma''');\Omega;\Gamma''' \vdash e'\substitute{\sf{this}}{o}\substitute{x}{\sf{getValue}(T')} : T' \dashv \Gamma' \\ \sf{returns}(t', T')}{\Theta;\Omega;\Gamma \vdash o.m(e) : T' \dashv \Gamma'} \\ \hspace{5.5cm}\text{where $\Gamma'''=\Gamma''[o\mapsto(o[C, \U'], \lambda)$}\\
       \inferrule[Call-d-rec]{(\Theta, o.m \mapsto \Gamma'');\Omega;\Gamma \vdash e : T \dashv \Gamma''' \\ \Gamma'''(o) = (o[C, \U], \lambda) \\ \U \trans[m] \U' \\\\ \sf{agree}(t, T)\\ \Gamma'' = \Gamma'''[o \mapsto (o[C, \U'], \lambda)]}{(\Theta, o.m \mapsto \Gamma'');\Omega;\Gamma \vdash o.m(e) : T' \dashv \Gamma'}\\
        \inferrule[Call-ind]{\Theta;\Omega;\Gamma \vdash e : T \dashv \Gamma'' \\ \Gamma''(o).f = o' \\\\ \Gamma''(o') = (o'[C, \U], \lambda) \\ \U \trans[m] \U' \\\\ \sf{fun}\ m(x : t) : t' \{e'\} \in \overline{D}(C).\sf{methods} \\ \sf{agree}(t, T)\\\\
        (\Theta, o'.m \mapsto \Gamma''');\Omega;\Gamma''' \vdash e'\substitute{\sf{this}}{o}\substitute{x}{\sf{getValue}(T')} : T' \dashv \Gamma' \\ \sf{returns}(t', T')}{\Theta;\Omega;\Gamma \vdash o.f.m(e) : T' \dashv \Gamma'}  \\\hspace{5.5cm}\text{where $\Gamma'''=\Gamma''[o'\mapsto(o'[C, \U], \lambda)$} \\
        \inferrule[Call-ind-rec]{(\Theta, o'.m \mapsto \Gamma'');\Omega;\Gamma \vdash e : T \dashv \Gamma''' \\ \Gamma'''(o).f = o' \\ \Gamma'''(o') = (o'[C, \U], \lambda) \\\\ \U \trans[m] \U' \\ \sf{agree}(t, T) \\ \Gamma'' = \Gamma'''[o' \mapsto (o'[C, \U'], \lambda)]}{(\Theta, o'.m \mapsto \Gamma'');\Omega;\Gamma \vdash o.f.m(e) : T' \dashv \Gamma'}\\
    \end{rules}
    \caption{Object operations and values}
    \label{fig:typing-objects}
\end{subfigure}
\begin{subfigure}[b]{0.96\textwidth}
    \hfill \fbox{$\Theta; \Omega; \Gamma \vdash e : T \dashv \Gamma'$}
    \begin{rules}
        \inferrule[If]{\Theta;\Omega;\Gamma \vdash e_1 : \sf{bool} \dashv \Gamma''\\ \Theta;\Omega;\Gamma'' \vdash e_2 : T \dashv \Gamma'\\ \Theta;\Omega;\Gamma'' \vdash e_3 : T \dashv \Gamma'}{\Theta;\Omega;\Gamma \vdash \sif{e_1}{e_2}{e_3} : T \dashv \Gamma'} \\
        \inferrule[Comp]{\Theta;\Omega;\Gamma \vdash e : T \dashv \Gamma'' \\ \Theta;\Omega;\Gamma'' \vdash e' : T'\dashv \Gamma'}{\Theta;\Omega;\Gamma \vdash e;e' : T' \dashv \Gamma'} \quad
        \inferrule[Label]{\Theta;\Omega, k \mapsto \Gamma;\Gamma \vdash e : T \dashv \Gamma'}{\Theta;\Omega;\Gamma \vdash k : e : T\dashv \Gamma'} \quad
        \inferrule[Continue]{\Omega(k) = \Gamma}{\Theta;\Omega;\Gamma \vdash \sf{continue}\ k : T \dashv \Gamma'} \\
        \inferrule[Case]{\Theta;\Omega;\Gamma \vdash e : L\ \sf{link}\ o \dashv \Gamma'' \\ \forall l_i\in L. 
        {\begin{cases}
            \Gamma''(o).\sf{usage} \trans[l_i] \U_i \\ \Theta;\Omega;\Gamma''[o.\sf{usage} \mapsto \U_i]\vdash e_i : T \dashv \Gamma'
        \end{cases}}}{\Theta;\Omega;\Gamma \vdash \sf{match}(e)\{\overline{l: e}\} : T \dashv \Gamma'}
    \end{rules}
    \caption{Composite expressions}
    \label{fig:typing-composite}
\end{subfigure}
    \caption{Typing rules for expressions}
    \label{fig:typing-rules}
\end{figure*}

Next follows the rules for control structures shown in \cref{fig:typing-composite}. The rule for sequential composition is straightforward, and uses the resulting environment from $e$ to type check $e'$. The rule (If) requires the same resulting environment for both branches. The (Case) rule checks that all branches of the choice-usage results in the same final environment, similar to (If) and updates the usage accordingly. Finally (Label) and (Continue) type checks loops. (Label) simply adds the current typing environment to $\Omega$, while (Continue) can result in an arbitrary type and typing environment. This behaviour is safe, since we know that the expression is well formed, meaning that any \sf{continue} statement is guarded by an if case or match statement, in which case only the choice of an environment that matches the other branch, can be chosen if the expression must be well typed.

Notice that in the (Case) rule we make use of the enumeration type $L\ \sf{link}\ o$. As such types do not \textit{agree} with any other types, they cannot be stored in fields or used as method arguments. So the only way for these types to be show up in a well-typed program, is as the return value of a method, used for matching in a case statement.

\begin{example}
\label{ex:typesystem_bank}

Consider again the bank account example presented in \cref{sec:overview}. The typing environments is an approximation of the heap, and after typing lines 41-46, the environment is:

\begin{align*}
    o_{\sf{main}} &\mapsto \begin{array}[t]{l}(\sf{Main}[\uend],\\ \{\texttt{account} \mapsto o_{\sf{acc}}, \texttt{manager} \mapsto o_{\sf{man}}, \texttt{db} \mapsto o_{\sf{d}}, \})\end{array} \\
    o_{\sf{acc}} &\mapsto \begin{array}[t]{l}(\sf{BankAccount}[\begin{array}[t]{l}\{\sf{setMoney}; \{\sf{applyInterest}; \\\{ \sf{getMoney}; \uend\}\}\}],\end{array}\\ \{\texttt{amount} \mapsto \texttt{double}\})\end{array} \\
    o_{\sf{man}} &\mapsto (\sf{SalaryManager}[\{\sf{addSalary}; \uend\}], \{\texttt{account} \mapsto o_{\sf{a}}\})\\
    o_{\sf{db}} &\mapsto (\sf{DataStorage}[\{\sf{store}; \uend\}], \{\texttt{account} \mapsto o_{\sf{a}}\})\\
\end{align*}

After type checking line 48, where the salary manager adds funds to the account, the following bindings are updated in the typing environment, while the remaining bindings are unchanged.

\begin{align*}
    o_{\sf{acc}} &\mapsto (\sf{BankAccount}[\{ \sf{getMoney}; \uend\}],\{\texttt{amount} \mapsto \texttt{double}\})\\
    o_{\sf{man}} &\mapsto (\sf{SalaryManager}[\{\sf{addSalary}; \uend\}], \{\texttt{account} \mapsto o_{\sf{a}}\})
\end{align*}

This allows line 49 to be typechecked, since the $o_{\sf{acc}}$ has been updated to allow a call to the method \texttt{getMoney}, this is an example of how the global type checking approach allow us to track changes to aliased fields, even if they happen through seemingly unrelated objects.

\end{example}

\section{Semantics}
\label{sec:semantics}

In this section we define the run-time semantics of the language. It follows the standard model where object references are used to look up values in the heap. The heap itself is similar in some respects to the typing environment we have previously discussed. The heap maps object references $o$ to their class and a field environment. In the semantics, we do not consider typestates, hence instead of mapping the object reference to a full type, we only map it to its class in order to look up method definitions and field declarations. The field bindings in the heap is a mapping from field names to values, which themselves can be object references or base values such as \sf{true}, \sf{null}, or \sf{unit}.
The initial field environment is defined similarly to $\overline{F}.\sf{inittypes}$, but instead it maps the fields to values instead, in $\overline{F}.\sf{initvals}$.

\begin{align*}
    (\overline{F}, \sf{var}\ f : C).\sf{initvals} &= \overline{F}.\sf{initvals}, f \mapsto \sf{null} \\
    (\overline{F}, \sf{var}\ f : \sf{bool}).\sf{initvals} &= \overline{F}.\sf{initvals}, f \mapsto \sf{false} \\
    (\overline{F}, \sf{var}\ f : \sf{void}).\sf{initvals} &= \overline{F}.\sf{initvals}, f \mapsto \sf{unit} \\ 
    (\overline{F}, \sf{var}\ f : L).\sf{initvals} &= \overline{F}.\sf{initvals}, f \mapsto l\ \\&\text{where}\ \sf{enum}\ L \{l, \overline{l}\} \in \overline{D} \\ 
    \emptyset.\sf{initvals} &= \emptyset
\end{align*}

We now define configurations, which are of the form $\<h, e\>$. When evaluating the expression $e$, both the expression and the heap can change. To model this, we let a computation step be of the form $\<h, e\> \ss \<h', e'\>$. The reduction rules are shown in \cref{fig:semantics}.

To simplify the reduction rules, we make use of an evaluation context to guide the evaluation of composite expressions.

\begin{abstractsyntax*}
    \ctx ::=&\ [\_] \mid o.f =\ctx \mid  \ctx ; e \mid o.m(\ctx) \mid o.f.m(\ctx) \\ \mid &\ \sif{\ctx}{e}{e} \mid  \sf{match}(\ctx)\{\overline{l : e}\}
\end{abstractsyntax*}

The \rt{ctx} rules ensures that inner expressions are evaluated first (e.g. left-hand side of a sequential expression are evaluated before right-hand side). The remaining rules handle the interesting base cases of the semantics.

\begin{figure*}[htbp]
\centering
\hfill \fbox{$\<h, e\> \ss \<h', e'\>$} \hspace{3cm}
\begin{rules}
    \inferrule[ctx]{\<h, e\> \ss \<h', e'\>}{\<h, \ctx[e]\>\ss \<h', \ctx[e']\>} \quad
    \inferrule[assign]{ }{\<h, o.f = v\> \ss \<h[o.f \mapsto v], \sf{unit}\>} \quad
    \inferrule[seq]{ }{\<h, v;e\> \ss \<h, e\>} \\
    \inferrule[if-true]{ }{\<h, \sif{\sf{true}}{e_1}{e_2}\> \ss \<h, e_1\>} \quad
    \inferrule[if-false]{ }{\<h, \sif{\sf{false}}{e_1}{e_2}\> \ss \<h, e_2\>} \\
    \inferrule[lab]{ }{\<h, k : e\> \ss \<h, e\substitute{\sf{continue}\ k}{k : e}\>} \quad
    \inferrule[match]{l_j: e_j \in \overline{l: e}}{\<h, \sf{match}(o.l_j) \{\overline{l : e}\}\> \trans \<h, e_j\>} \\
    \inferrule[call-d]{h(o).\sf{class}.\sf{methods} \ni \sf{fun}\ m(x : t) : t' \{ e \}}{\<h, o.m(v)\> \ss \<h, e\substitute{\sf{this}}{o}\substitute{x}{v}\>} \quad
    \inferrule[call-ind]{h(o).f = o' \\ h(o').\sf{class}.\sf{methods} \ni \sf{fun}\ m(x : t) : t' \{ e \}}{\<h, o.f.m(v)\> \ss \<h, e\substitute{\sf{this}}{o'}\substitute{x}{v}\>} \\
    \inferrule[new]{o'\ \text{fresh} \\ h' = (h, o' \mapsto (C, C.\sf{fields}.\sf{initvals}))[o.f \mapsto o']}{\<h, o.f = \sf{new}\ C\> \ss \<h', \sf{unit}\>} \quad
    \inferrule[fld]{ h(o).\sf{fields}(f)=v}{\<h, o.f\> \ss \<h, v\>}
\end{rules}
\caption{Run-time semantics}
\label{fig:semantics}
\end{figure*}

\begin{example}
\label{ex:semantics_bank}

Consider again the bank account example. When reaching line 48, the heap contains the following bindings:

\begin{align*}
    o_{\sf{main}} &\mapsto (\sf{Main},  \{\texttt{account} \mapsto o_{\sf{acc}}, \texttt{manager} \mapsto o_{\sf{man}}, \texttt{db} \mapsto o_{\sf{d}}, \}) \\
    o_{\sf{acc}} &\mapsto (\sf{BankAccount}, \{\texttt{amount} \mapsto 0\}) \\
    o_{\sf{man}} &\mapsto (\sf{SalaryManager}, \{\texttt{account} \mapsto o_{\sf{a}}\})\\
    o_{\sf{db}} &\mapsto (\sf{DataStorage}, \{\texttt{account} \mapsto o_{\sf{a}}\})\\
\end{align*}

After evaluating the expression on line 48, where the salary manager adds funds to the account, the following bindings are updated in the heap, while the remaining bindings are unchanged.

\[
    o_{\sf{acc}} \mapsto (\sf{BankAccount},\{\texttt{amount} \mapsto 100\})
\]

We see that compared to the type system, fewer bindings were updated, due to the typestates not being tracked in the semantics. However, the resulting environments from the type system and the semantics remains consistent, meaning that the types mentioned in the type system are consistent with the values in the heap. This property and more will be shown in the following section.

\end{example}

\section{Properties}
\label{sec:properties}
In this section we show important properties that hold for the defined language. The first result we show is the fact that we can remove bindings from $\Theta$ while the expression remains well-typed. The intuition of this is that $\Theta$ serves to denote the base case of checking recursive calls. So when we remove a binding from the environment, we simply have to expand the method body once more, leading to the entry being added again in $\Theta$. 

\def\typingfinal{\Gamma^{F}}
\def\typingbound{\Gamma^{N}}
\def\exprbody{e_b}
\def\exprparorig{e_{borig}}
\def\exprorig{o.m(\exprparorig)}
\def\typeorig{T_{orig}}
\def\typingorig{\Gamma_{orig}}

\begin{restatable}{lemma}{recursionlemma}
\label{lemma:recursion_unfolding}
If in a typing derivation starting from an empty recursion environment we have $\Theta, o.m \mapsto \typingbound ; \emptyset ; \Gamma \vdash e : T \dashv \typingfinal$ then we also have $\Theta ; \emptyset ; \Gamma \vdash e : T \dashv \typingfinal$.
\end{restatable}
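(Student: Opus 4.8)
The plan is to proceed by induction on the typing derivation of $\Theta, o.m \mapsto \Gamma^N; \emptyset; \Gamma \vdash e : T \dashv \Gamma^F$, generalising the statement to an arbitrary label environment $\Omega$ in place of $\emptyset$. This generalisation is forced on us by the \rt{Label} case, whose premise is typed under the extended environment $\Omega, k \mapsto \Gamma$. Throughout I treat the recursion environment as a finite map and allow the deleted entry $o.m \mapsto \Gamma^N$ to sit anywhere in $\Theta$, since the \rt{Call-d} and \rt{Call-ind} cases push a fresh entry on top of it before recursing. The goal in every case is to reconstruct a derivation with the \emph{same} result type $T$ and the \emph{same} final environment $\Gamma^F$, only with the entry $o.m \mapsto \Gamma^N$ removed.

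First I would dispatch the routine cases. For every rule that does not consult the recursion environment — \rt{Assign}, \rt{Field}, \rt{New}, \rt{Obj}, the value rules \rt{Unit}, \rt{Bool}, \rt{Enum}, \rt{Null}, \rt{Const}, and \rt{If}, \rt{Comp}, \rt{Case}, \rt{Label}, \rt{Continue} — the environment $\Theta$ (minus the deleted binding) is simply threaded through the premises, so I apply the induction hypothesis to each subderivation and reassemble with the same rule. Because the hypothesis returns derivations with identical intermediate and final environments, the reconstructed rule produces the same $\Gamma^F$. \rt{Continue} and \rt{Label} are unaffected because they depend only on $\Omega$, and the generalised hypothesis covers the extended $\Omega$ of \rt{Label}. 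The non-recursive call rules \rt{Call-d} and \rt{Call-ind}, used the first time a method is expanded, also just thread $\Theta$: they push their own fresh entry, so I delete $o.m \mapsto \Gamma^N$ from the (now larger) recursion environment of the body premise by the induction hypothesis and rebuild.

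The interesting cases are \rt{Call-d-rec} and \rt{Call-ind-rec} \emph{when the entry they consult is exactly the deleted $o.m \mapsto \Gamma^N$}. Once the binding is gone the recursive rule can no longer fire, so the plan is to replace that leaf by a genuine re-expansion using \rt{Call-d} (resp.\ \rt{Call-ind}). The matching side condition $\Gamma^N = \Gamma'''[o \mapsto (o[C,\U'],\lambda)]$ of the recursive rule says precisely that the method-entry environment built by \rt{Call-d} coincides with $\Gamma^N$; hence the re-expanded body is type-checked from the state $\Gamma^N$ under a recursion environment that \emph{re-adds} $o.m \mapsto \Gamma^N$, so nested recursive calls inside the body can fire via \rt{Call-d-rec} exactly as before and the construction does not regress infinitely. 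The body derivation required as a premise of \rt{Call-d} is supplied by the hypothesis that the whole derivation starts from the empty recursion environment: the binding $o.m \mapsto \Gamma^N$ can only have been introduced by an earlier \rt{Call-d}/\rt{Call-ind} step, whose body premise is a derivation of that very method body from $\Gamma^N$ with $o.m \mapsto \Gamma^N$ present, and this is what I graft in.

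The hard part is making the grafted re-expansion agree with the leaf it replaces on both the result type and, especially, the final environment. The original \rt{Call-d-rec} leaf emits an unconstrained $T'$ and final environment (exactly as \rt{Continue} does), whereas \rt{Call-d} pins the result to the output of the body derivation. Reconciling the two is where the well-formedness assumptions are essential: by \cref{def:well-formed methods} recursive calls are guarded by a branching expression, and I expect they must — like \sf{continue} — occupy tail position, so that the dominating \rt{If} or \rt{Case} rule forces the recursive call's otherwise-arbitrary output to coincide with that of the terminating branch, which is the body's genuine output. A second point needing care is reconciling the recursion environment recorded when $o.m \mapsto \Gamma^N$ was introduced with the larger one in force at the recursive leaf (the difference being entries from nested expansions still in progress); I would discharge this by carrying the ``derivation starts from the empty recursion environment'' hypothesis as an invariant relating every entry of $\Theta$ to the \rt{Call-d}/\rt{Call-ind} step that created it. This output-and-environment reconciliation is the crux; the remaining manipulations are routine bookkeeping.
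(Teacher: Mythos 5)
Your proposal matches the paper's own proof in all essentials: the same induction (the paper phrases it as induction on $e$, but like you it really recurses on the typing derivation, since it applies the hypothesis to method-body premises), the same key move of replacing a \rt{Call-d-rec}/\rt{Call-ind-rec} leaf by a genuine re-expansion via \rt{Call-d}, whose body derivation is recovered by inversion from the earlier \rt{Call-d} that introduced the binding (using the ``starting from an empty recursion environment'' hypothesis), the same appeal to well-formedness---recursive calls guarded and in tail position---to force the leaf's otherwise-unconstrained type and final environment to coincide with the body's genuine output, and the same final weakening of the recursion environment. Your refinements (generalising the statement over $\Omega$ to survive the \rt{Label} case, and tracking an invariant tying each entry of $\Theta$ to the call that created it) only make explicit bookkeeping that the paper glosses over, so the two arguments are essentially identical.
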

\begin{proof}
Details in \cref{app:recursion_unfolding_proof}.
\end{proof}

Along with a similar proof for labelled expressions, where bindings can be removed from $\Omega$, this shows that we often consider situations where $\Theta$ and $\Omega$ are empty. So for readability of the upcoming properties, we omit writing the environments when they are empty, so $\Gamma \vdash e : T \dashv \Gamma'$ is equivalent to $\emptyset;\emptyset;\Gamma \vdash e : T \dashv \Gamma'$.

We must establish a soundness result, and show that the usages defined for classes are respected at run-time, and no protocol deviation occurs. To establish such a relationship between the type system and the semantics, we first define \textit{consistency} between a heap and a typing environment, which describes that the typing environment and heap agree on the classes of all objects, and agree on the field bindings of all objects. 

\begin{definition}[Heap consistency]
We say that a heap is consistent with a typing environment, written $\Gamma \vdash h$, if $\Gamma$ and $h$ contains the same objects and the field bindings of each object are also consistent. 

\[
    \inferrule{\dom{h}=\dom{\Gamma} \\ \forall o \in \dom{\Gamma}.h(o).\sf{fields}=\Gamma(o).\sf{fields}\wedge h(o).\sf{class}=\Gamma(o).\sf{class}}{\Gamma \vdash h}
\]

\end{definition}
Furthermore, we lift the transition system for usages to typing environments, with the rules shown in \cref{fig:trans_typing_context}. Notice how the transitions match the updates to a typing environment performed by the typing rules shown in \cref{fig:typing-objects}. This allows us to establish that only a single update is performed to a typing environment when evaluating one step in the reduction semantics.

\begin{figure}
    \centering
    \begin{gather*}
        \inferrule[empty]{ }{\Gamma \trans[\varepsilon] \Gamma} \quad
        \inferrule[trans]{\Gamma(o).\sf{usage} \trans[\alpha] \U}{\Gamma \trans[o.\alpha] \Gamma[o.\sf{usage} \mapsto \U]} \quad 
        \inferrule[update]{\Gamma(o).f = t}{\Gamma \trans[\varepsilon] \Gamma[o.f \mapsto t']} \\
        \inferrule[new]{o \in \text{dom}(\Gamma) \\ o'\ \text{fresh} \\ \sf{class}\ C \{\U, \overline{F}, \overline{M}\} \in \overline{D}}{\Gamma \trans[\varepsilon] (\Gamma, o' \mapsto (o'[C, \U], \overline{F}.\sf{inittypes}))[o.f\mapsto o']}
    \end{gather*}
    \caption{Transition system for typing environments}
    \label{fig:trans_typing_context}
\end{figure}

To complete subject reduction (\cref{thm:subject-reduction}), we consider a semantics where the transitions in (call-d) and (call-ind) are annotated with $o.m$ and $o'.m$ respectively, (match) is annotated with $o.l'$ and all other transitions are annotated with the empty string $\varepsilon$. We can use these labels to show a correspondence between the transitions on typing environments, and the transitions between run-time configurations.

The subject reduction theorem states that a single reduction of a well-typed expression can be matched by a single transition from a consistent typing environment. In other words, this tells us that a single reduction preserves well-typedness with a single update to the typing environment.

\begin{restatable}[Subject Reduction]{theorem}{subjectreduction}
\label{thm:subject-reduction}
If $\Gamma \vdash h$, $\Gamma \vdash e:T \dashv \Gamma'$ and $\<h, e\> \ss[\alpha] \<h', e'\>$ then $\exists \Gamma''.\Gamma'' \vdash e' : T \dashv \Gamma'$ such that $\Gamma \trans[\alpha] \Gamma''$ and $\Gamma'' \vdash h'$
\end{restatable}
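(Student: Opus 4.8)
The plan is to proceed by induction on the derivation of the reduction step $\langle h, e\rangle \ss[\alpha] \langle h', e'\rangle$, with a case analysis on the last rule applied. The only inductive case is \rt{ctx}; every other rule is a base case in which the redex sits directly under the top-level judgment, where $\Theta$ and $\Omega$ are empty. Because the evaluation contexts of \cref{fig:semantics} place the hole only in argument, condition, or left-of-sequence positions, and the typing rules feeding those positions do not extend $\Theta$ or $\Omega$, the redex is always typed under empty recursion and loop environments; in particular, for method calls only \rt{Call-d} and \rt{Call-ind} (never their recursive variants) can have produced the typing. In each case I first invert $\Gamma \vdash e : T \dashv \Gamma'$ to expose the sub-derivations, then choose the witness $\Gamma''$ as the environment produced by the matching transition rule of \cref{fig:trans_typing_context}, re-derive $\Gamma'' \vdash e' : T \dashv \Gamma'$, and finally check $\Gamma'' \vdash h'$. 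Throughout I also use the easy fact that typing a value leaves the environment unchanged.

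For \rt{ctx} I decompose $e = \ctx[e_0]$, invert the typing rule dictated by the shape of $\ctx$ to obtain a typing of $e_0$ under $\Gamma$, apply the induction hypothesis to get $\Gamma''$ with $\Gamma \trans[\alpha] \Gamma''$, a typing of $e_0'$ ending in the same continuation environment, and $\Gamma'' \vdash h'$, and then reassemble: the remaining premises of the context's typing rule refer only to that continuation environment, which the hypothesis preserves, so they still hold. The heap-mutating base cases are direct: \rt{assign} uses the \rt{update} transition with $\Gamma'' = \Gamma[o.f \mapsto \sf{vtype}(T)]$, and \rt{new} uses the \rt{new} transition adding the fresh object with its initial field types; both preserve consistency because the field type recorded matches the value stored. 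The control-flow cases \rt{seq}, \rt{if-true}, \rt{if-false} leave the heap fixed, carry label $\varepsilon$, and take $\Gamma'' = \Gamma$ via \rt{empty}, reading the continuation typing off \rt{Comp} resp. \rt{If}. For \rt{match} I take $\Gamma'' = \Gamma[o.\sf{usage} \mapsto \U_j]$ via \rt{trans} (label $o.l_j$) and reuse the $j$-th branch premise of \rt{Case}; since heap consistency ignores usages, $\Gamma'' \vdash h$ is immediate. The case \rt{fld} needs a small lemma stating that a heap value consistent with a field's recorded type has exactly the type $\sf{getType}$ assigns to it, which follows from $\Gamma \vdash h$.

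The two delicate base cases are method invocation and loop unrolling. For \rt{call-d} (and symmetrically \rt{call-ind}), inversion of \rt{Call-d} gives $\U \trans[m] \U'$ together with a typing of the substituted body, under the recursion environment extended by $o.m$, ending in $\Gamma'$ and starting from $\Gamma[o \mapsto (o[C,\U'],\lambda)]$. I take $\Gamma''$ to be this start environment, so that $\Gamma \trans[o.m] \Gamma''$ holds by \rt{trans}, and then apply \cref{lemma:recursion_unfolding} to discharge the $o.m$ binding and land under the empty recursion environment demanded by the conclusion. The remaining gap is that the type system substitutes a canonical placeholder of the argument's type for the formal parameter, whereas \rt{call-d} substitutes the actual value $v$; I bridge this with a value-substitution lemma stating that typing depends only on the type of the parameter, so that $e'\substitute{\this}{o}\substitute{x}{v}$ is typed identically. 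The heap is unchanged and usages do not affect consistency, so $\Gamma'' \vdash h$ holds. The case \rt{lab} reduces $k : e_0$ to its unrolling $e_0\substitute{\sf{continue}\ k}{k : e_0}$; here I invert \rt{Label} and invoke the $\Omega$-analogue of \cref{lemma:recursion_unfolding} to turn the typing under $\Omega, k \mapsto \Gamma$ into a typing of the unrolled body under empty $\Omega$, taking $\Gamma'' = \Gamma$ via \rt{empty}.

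I expect the main obstacle to be the method-call cases, precisely in the interaction between \cref{lemma:recursion_unfolding} and the value-substitution lemma: the former is what lets the extracted body derivation live under the empty recursion environment required by the conclusion, while the latter is forced by the mismatch between the concrete values used by the semantics and the type-indexed placeholders used by the typing rules. Stating the value-substitution lemma (and the companion heap-value typing lemma used for \rt{fld}) generally enough — uniformly covering object references and base values, and stable under the $\this$ and argument substitutions — is the part demanding the most care. Once these auxiliary results are in place, each case reduces to selecting the transition of \cref{fig:trans_typing_context} that touches exactly the binding the semantics modifies and checking that the recorded type agrees with the new heap value.
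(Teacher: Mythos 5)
Your proposal is correct and follows essentially the same route as the paper's proof: the paper performs structural induction on $e$ with each case split into the base reduction rule versus \rt{ctx}, which matches your induction on the reduction derivation case for case, and it invokes the same two key lemmas (\cref{lemma:recursion_unfolding} to discharge the $\Theta$ binding in the \rt{call-d}/\rt{call-ind} cases, and the label-unfolding/substitution lemma for \rt{lab}), choosing the same witness $\Gamma''$ in every case. The one point where you are more careful than the paper is the value-substitution lemma bridging $\sf{getValue}(T)$ and the actual value $v$: the paper's \rt{Call-D} case silently writes the typing of the substituted body with $v$ plugged in, eliding exactly the mismatch you identify.
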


\begin{proof}
See details in \cref{app:proof-subject-reduction}
\end{proof}

\begin{example}
\label{ex:properties_bank}
We return to the configuration just before executing line \ref{lst:aliasing-bankaccount:1} in \cref{lst:aliasing-bankaccount}. In \cref{ex:typesystem_bank,ex:semantics_bank} we have stated what the heap and typing environment contain when reaching this statement. In this example we show the consistency between the heap and typing environment for the expression after a single transition.

The remaining expression of the program at this point is:

\[
    e = o_{\sf{main}}.\texttt{manager}.\texttt{addSalary}(100.0);o_{\sf{main}}.\texttt{db}.\texttt{store}(\texttt{unit})
\]

In \cref{ex:typesystem_bank} we identified the typing environment as: 
\begin{align*}
\Gamma = \left\{\begin{array}{l}
    o_{\sf{main}} \mapsto \begin{array}[t]{l}(\sf{Main}[\uend],\\ \begin{array}[t]{l}\{\texttt{account} \mapsto o_{\sf{acc}}, \texttt{manager} \mapsto o_{\sf{man}},\\ \texttt{db} \mapsto o_{\sf{d}}, \})\end{array}\end{array} \\
    o_{\sf{acc}} \mapsto \begin{array}[t]{l}(\sf{BankAccount}[\begin{array}[t]{l}\{\sf{setMoney};\\ \{\sf{applyInterest}; \\\{ \sf{getMoney}; \uend\}\}\}],\end{array}\\ \{\texttt{amount} \mapsto \texttt{double}\})\end{array} \\
    o_{\sf{man}} \mapsto \begin{array}[t]{l}(\sf{SalaryManager}[\{\sf{addSalary}; \uend\}],\\ \{\texttt{account} \mapsto o_{\sf{a}}\})\end{array}\\
    o_{\sf{db}} \mapsto \begin{array}[t]{l}(\sf{DataStorage}[\{\sf{store}; \uend\}],\\ \{\texttt{account} \mapsto o_{\sf{a}}\})\end{array}
    \end{array}
\right\}
\end{align*}

In \cref{ex:semantics_bank} we identified the heap as:

\begin{align*}
h = \left\{\begin{array}{l}
    o_{\sf{main}} \mapsto \begin{array}[t]{l}(\sf{Main},  \{\texttt{account} \mapsto o_{\sf{acc}},\\ \texttt{manager} \mapsto o_{\sf{man}}, \texttt{db} \mapsto o_{\sf{d}}, \})\end{array} \\
    o_{\sf{acc}} \mapsto (\sf{BankAccount}, \{\texttt{amount} \mapsto 0\}) \\
    o_{\sf{man}} \mapsto (\sf{SalaryManager}, \{\texttt{account} \mapsto o_{\sf{a}}\})\\
    o_{\sf{db}} \mapsto (\sf{DataStorage}, \{\texttt{account} \mapsto o_{\sf{a}}\})
    \end{array}
\right\}
\end{align*}

We have $\Gamma \vdash e : \sf{void} \dashv \Gamma''$ where $\Gamma''$ is the terminated environment containing the objects of $\Gamma$. Using the (ctx) and (call-ind) rule we can conclude the following transition (we let $e'$ denote the updated expression):

\begin{align*}
\<h, o_{\sf{main}}.\texttt{manager}.\texttt{add}&\texttt{Salary}(100.0);o_{\sf{main}}.\texttt{db}.\texttt{store}(\texttt{unit})\> \\ 
\ss[o_{\sf{man}}.\texttt{addSalary}] 
\<h, (&o_{\sf{man}}.\texttt{account}.\texttt{setMoney}(100.0);\\&o_{\sf{man}}.\texttt{account}.\texttt{applyInterest}(1.05));\\&o_{\sf{main}}.\texttt{db}.\texttt{store}(\texttt{unit})\>
\end{align*}

Now let $\Gamma'$ be the updated environment where a single transition has been performed on the salary manager object: 

\begin{align*}
\Gamma' = \left\{\begin{array}{l}
    o_{\sf{main}} \mapsto \begin{array}[t]{l}(\sf{Main}[\uend],\\ \begin{array}[t]{l}\{\texttt{account} \mapsto o_{\sf{acc}}, \texttt{manager} \mapsto o_{\sf{man}},\\ \texttt{db} \mapsto o_{\sf{d}}, \})\end{array}\end{array} \\
    o_{\sf{acc}} \mapsto \begin{array}[t]{l}(\sf{BankAccount}[\begin{array}[t]{l}\{\sf{setMoney};\\ \{\sf{applyInterest}; \\\{ \sf{getMoney}; \uend\}\}\}],\end{array}\\ \{\texttt{amount} \mapsto \texttt{double}\})\end{array} \\
    o_{\sf{man}} \mapsto \begin{array}[t]{l}(\sf{SalaryManager}[\uend],\\ \{\texttt{account} \mapsto o_{\sf{a}}\})\end{array}\\
    o_{\sf{db}} \mapsto \begin{array}[t]{l}(\sf{DataStorage}[\{\sf{store}; \uend\}],\\ \{\texttt{account} \mapsto o_{\sf{a}}\})\end{array}
    \end{array}
\right\}
\end{align*}

We can conclude $\Gamma \trans[o_{\sf{man}}.\texttt{addSalary}] \Gamma'$ with the (trans) rule. It is clear that we have $\Gamma' \vdash h$ since we have only updated a usage which is not considered in the consistency relation. Finally we can also conclude $\Gamma' \vdash e' : \sf{void} \dashv \Gamma''$ since the remaining usages in $\Gamma'$ corresponds to the remaining method calls in $e'$ (and also directly from the typing rule of $\Gamma \vdash e : \sf{void} \dashv \Gamma''$).

\end{example}

As previously mentioned, we use the labels of the run-time semantics to establish a correspondence between updates to the typing environment and the run-time configurations. In \cref{cor:usage_conformance}, which follows from \cref{thm:subject-reduction}, we make this correspondence explicit by showing that when a method call or label selection occurs at run-time, this always follows the protocol of the object. 

\begin{corollary}[Protocol conformance]
\label{cor:usage_conformance}
If $\Gamma \vdash e : T \dashv \Gamma''$, $\Gamma \vdash h$, $\<h, e\> \ss[o.\alpha] \<h', e'\>$ then $\exists \Gamma'.\ \Gamma' \vdash e' : T \dashv \Gamma''$ and $\Gamma(o).\sf{usage} \trans[\alpha] \Gamma'(o).\sf{usage}$
\end{corollary}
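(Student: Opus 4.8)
The plan is to derive this corollary directly from subject reduction (\cref{thm:subject-reduction}), followed by an inversion on the transition relation for typing environments (\cref{fig:trans_typing_context}). The only real content beyond \cref{thm:subject-reduction} is recognising that a non-empty, object-annotated label on a typing-environment transition can only have been produced by the \rt{trans} rule, from which the desired usage transition can be read off immediately.

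First I would instantiate \cref{thm:subject-reduction} with its generic label $\alpha$ taken to be the object-annotated label $o.\alpha$ of the corollary. From the hypotheses $\Gamma \vdash e : T \dashv \Gamma''$, $\Gamma \vdash h$ and $\<h, e\> \ss[o.\alpha] \<h', e'\>$, subject reduction yields a typing environment --- which I rename to $\Gamma'$ to match the statement --- satisfying $\Gamma' \vdash e' : T \dashv \Gamma''$, the transition $\Gamma \trans[o.\alpha] \Gamma'$, and consistency $\Gamma' \vdash h'$. The first of these is exactly the first conjunct of the corollary, so the witness $\Gamma'$ is the one provided by subject reduction. Note the deliberate swap of primed names: the environment that subject reduction calls the ``new current'' environment plays the role of the corollary's $\Gamma'$, while the shared final environment is $\Gamma''$.

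It then remains to extract $\Gamma(o).\sf{usage} \trans[\alpha] \Gamma'(o).\sf{usage}$ from $\Gamma \trans[o.\alpha] \Gamma'$. Here I would argue by inversion on the rules of \cref{fig:trans_typing_context}: the rules \rt{empty}, \rt{update} and \rt{new} all carry the empty label $\varepsilon$, so the only rule whose conclusion matches a non-empty, object-prefixed label $o.\alpha$ is \rt{trans}. Inverting \rt{trans} gives a usage $\U$ with $\Gamma(o).\sf{usage} \trans[\alpha] \U$ and $\Gamma' = \Gamma[o.\sf{usage} \mapsto \U]$; hence $\Gamma'(o).\sf{usage} = \U$ and the required usage transition follows.

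The main obstacle is not in the argument itself but in lining up the labelling conventions so that the inversion is sound. Subject reduction is stated for the annotated semantics in which \rt{call-d}, \rt{call-ind} and \rt{match} carry the labels $o.m$, $o'.m$ and $o.l'$ respectively, while all other reductions carry $\varepsilon$; I must check that the object appearing in the corollary's label $o.\alpha$ is precisely the object whose usage is advanced --- in particular, for an indirect call the relevant object is the field's target $o'$, not the object holding the field --- so that the $o$ in the \rt{trans} inversion and the $o$ in the semantic label genuinely coincide. Once this correspondence is confirmed, the remaining steps are purely mechanical inversions.
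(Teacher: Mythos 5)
Your proposal is correct and follows exactly the route the paper intends: the paper's text simply states that \cref{cor:usage_conformance} ``follows from \cref{thm:subject-reduction}'', and your proof fills in that claim in the natural way---apply subject reduction (with the primed-environment renaming you note) and then invert the environment transition, observing that only the \rt{trans} rule of \cref{fig:trans_typing_context} can produce a label of the form $o.\alpha$. Your check that the semantic labels name the object whose usage actually advances (e.g.\ the field's target $o'$ in \rt{call-ind}) is consistent with the paper's annotation convention, so no gap remains.
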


\begin{lemma}[Protocol completion]
\label{lemma:protocol_completion}
Let $\overline{D}$ be a well-typed program and let $c$ be the initial configuration of $\overline{D}$. If $c \ss^* \<h, v\>$ then all objects in $h$ has finished their protocol. 
\end{lemma}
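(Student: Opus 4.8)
The plan is to transport well-typedness and heap consistency along the entire reduction sequence using \cref{thm:subject-reduction}, and then to exploit two facts: that subject reduction holds the final typing environment fixed, and that the (Main) rule forces this final environment to be \emph{terminated}. Concretely, I would induct on the number of reduction steps in $c \ss^* \<h, v\>$, maintaining as an invariant that the current expression is well typed against the \emph{same} final environment $\Gamma_F$ at the same type $T$, and that the current heap is consistent with the current source typing environment.

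First I would fix the initial data. Since $\overline{D}$ is well typed, the (Main) rule supplies, for the body $e$ of $\sf{main}$, a derivation $\Gamma_0 \vdash e : T \dashv \Gamma_F$ with $\sf{term}(\Gamma_F)$, where $\Gamma_0$ is the initial typing environment binding only $o_{\sf{main}}$. The initial configuration is $c = \<h_0, e\>$, and $\Gamma_0 \vdash h_0$ holds by construction, since the field initialisers $\overline{F}.\sf{inittypes}$ and $\overline{F}.\sf{initvals}$ are defined in lock-step on the same field names and classes. The key structural remark is that in \cref{thm:subject-reduction} the right-hand environment is not altered by a step, so $\Gamma_F$ serves as the common final environment for every configuration in the run.

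Next I would run the induction. Writing the run as $c = \<h_0, e_0\> \ss[\alpha_1] \<h_1, e_1\> \ss[\alpha_2] \cdots \ss[\alpha_n] \<h_n, e_n\> = \<h, v\>$, the base case holds with $\Gamma_0$. For the step, given $\Gamma_i \vdash e_i : T \dashv \Gamma_F$ and $\Gamma_i \vdash h_i$, applying \cref{thm:subject-reduction} to $\<h_i, e_i\> \ss[\alpha_{i+1}] \<h_{i+1}, e_{i+1}\>$ yields $\Gamma_{i+1}$ with $\Gamma_{i+1} \vdash e_{i+1} : T \dashv \Gamma_F$ and $\Gamma_{i+1} \vdash h_{i+1}$, re-establishing the invariant. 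At the end we obtain $\Gamma_v$ with $\Gamma_v \vdash v : T \dashv \Gamma_F$ and $\Gamma_v \vdash h$. Since $v$ is a value, an inversion on its syntax-directed typing rule---one of (Unit), (Bool), (Null), (Obj), (Const), or (Enum)---shows the output environment equals the input, i.e.\ $\Gamma_v \vdash v : T \dashv \Gamma_v$; hence $\Gamma_F = \Gamma_v$. Therefore $\sf{term}(\Gamma_v)$ holds, so $\Gamma_v(o).\sf{usage} = \uend$ for every $o \in \dom{\Gamma_v}$, and since $\Gamma_v \vdash h$ forces $\dom{h} = \dom{\Gamma_v}$, every object in $h$ has finished its protocol.

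I expect the main obstacle to be the terminal value-inversion step that identifies $\Gamma_v$ with $\Gamma_F$: it relies on a clean auxiliary fact that typing any value leaves the environment unchanged---a short case analysis over the grammar of value forms---and one must check that the reached $v$ really ranges only over those forms. Everything else is routine bookkeeping: the iterated application of subject reduction and the translation of $\sf{term}(\Gamma_v)$ together with $\Gamma_v \vdash h$ into the statement that all objects have finished, both of which go through precisely because \cref{thm:subject-reduction} keeps the final environment $\Gamma_F$ fixed across the whole run.
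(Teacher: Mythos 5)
Your proposal is correct and follows the same overall strategy as the paper's proof: extract $\sf{term}(\Gamma_F)$ from the (Main) rule and transport it along the reduction sequence via the soundness results. The difference is in execution, and it matters. The paper's own proof is two sentences: it cites (Main) for $\sf{term}(\Gamma')$ and then appeals to \cref{cor:usage_conformance} to say that ``all objects have followed their protocols.'' But conformance alone only guarantees that each transition performed at run-time was permitted by the relevant usage; it does not by itself yield that the protocols are \emph{finished} once the program has reduced to a value. Your proof supplies exactly the two ingredients the paper leaves implicit: (i) the explicit induction along the run, using the fact that \cref{thm:subject-reduction} holds the right-hand environment $\Gamma_F$ fixed while rebinding the left-hand environment and the heap at each step, and (ii) the terminal value-inversion step, observing that every value-typing rule (Unit), (Bool), (Null), (Obj), (Const), (Enum) has output environment equal to input, which forces $\Gamma_v = \Gamma_F$ and hence $\sf{term}(\Gamma_v)$ for an environment consistent with the final heap $h$. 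That inversion is precisely where the static guarantee $\sf{term}(\Gamma_F)$ meets the dynamic configuration $\<h, v\>$, and it is the step that makes ``finished their protocol'' a theorem rather than a gloss. In short: same route, but your elaboration closes a genuine gap in the paper's terse argument, at the cost of being longer; the paper's version reads as a sketch that presupposes exactly the bookkeeping you carried out.
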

\begin{proof}
Since $\vdash \overline{D}\ \texttt{ok}$ we know from (Main) that $\sf{Main} \{\U, \overline{F}, \overline{M}\} \allowbreak\in \overline{D}$, $\overline{M} = \{\sf{fun}\ \sf{main}()\ \{e\}\}$, and $\{o_{\sf{main}} \mapsto (\sf{Main}[\uend], \overline{F}.\sf{inittypes})\} \allowbreak\vdash e : T \dashv \Gamma'$ where $\sf{term}(\Gamma')$. Since $\sf{term}(\Gamma')$ we know that all objects have terminated protocols, and from \cref{cor:usage_conformance} we know that all objects has followed their protocols.
\end{proof}

We can now conclude with progress property, which states that well-typed programs do not get stuck.

\begin{restatable}[Progress]{theorem}{progress}
\label{thm:progress}
If $\Theta;\Omega;\Gamma \vdash e:T \dashv \Gamma'$, $\Gamma \vdash h$, then either $e$ is a value or $\exists h', e'.\ \<h, e\> \ss \<h', e'\>$
\end{restatable}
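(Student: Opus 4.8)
The plan is to prove \cref{thm:progress} by induction on the derivation of $\Theta;\Omega;\Gamma \vdash e : T \dashv \Gamma'$, case-splitting on the last rule applied, which is determined by the head shape of $e$. The cases fall into three groups: (i) \emph{value forms}, where $e$ is already a value and there is nothing to prove; (ii) \emph{congruence cases}, where $e$ has a typed subexpression sitting in the hole of an evaluation context $\ctx$ and we appeal to the induction hypothesis; and (iii) \emph{redex cases}, where a base reduction rule of \cref{fig:semantics} fires directly. Throughout I will use the consistency hypothesis $\Gamma \vdash h$ to transfer structural facts from the typing environment to the heap, together with two small canonical-forms observations (a value of type $\sf{bool}$ is $\sf{true}$ or $\sf{false}$; a value of type $L\ \sf{link}\ o$ has the form $o.l_j$).

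The value cases are (Unit), (Bool), (Null), (Enum), (Const) and (Obj): in each, $e$ is a value and the conclusion holds immediately. For the congruence cases — $o.f = e$ by (Assign), $e;e'$ by (Comp), $o.m(e)$ and $o.f.m(e)$ by the call rules, $\sif{e_1}{\ldots}{\ldots}$ by (If), and $\sf{match}(e)\{\ldots\}$ by (Case) — I first apply the induction hypothesis to the subexpression standing in the hole of the corresponding context $\ctx$. If that subexpression can step, the whole expression steps by (ctx); if it is already a value, we fall into a redex case handled below. Note that (Call-d)/(Call-d-rec) and (Call-ind)/(Call-ind-rec) all reduce \emph{identically} at run time (via (call-d) and (call-ind) respectively), so the distinction between the ordinary and recursive typing rules is irrelevant to progress.

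For the redex cases I must check that the side conditions of the relevant base rule hold, and this is precisely where $\Gamma \vdash h$ is used. Consistency gives $\dom{h} = \dom{\Gamma}$ together with agreement of classes and field maps, so: for $o.f = v$ and $o.f = \sf{new}\ C$, the receiver $o$ lies in $\dom{h}$ and (assign)/(new) apply; for $o.f$, the field $f$ is bound in $h(o).\sf{fields}$ because it is bound in $\Gamma(o).\sf{fields}$, so (fld) applies; for $o.m(v)$ and $o.f.m(v)$, the typing rule locates the method in $\overline{D}(C).\sf{methods}$, and class agreement $h(o).\sf{class} = \Gamma(o).\sf{class} = C$ ensures the same method is found at run time, so (call-d)/(call-ind) apply. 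The remaining redexes are immediate: $v;e'$ steps by (seq); $\sif{v}{\ldots}{\ldots}$ with $v : \sf{bool}$ steps by (if-true)/(if-false) using the boolean canonical form; $\sf{match}(v)\{\ldots\}$ with $v : L\ \sf{link}\ o$ steps by (match) using the link canonical form; and $k : e$ \emph{always} steps by (lab), which unfolds the loop.

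I expect the main obstacle to be the (Continue) case, and more generally bare parameters $x$: neither has a reduction rule, so in isolation $\sf{continue}\ k$ would be stuck. This case can only arise when $\Omega$ is non-empty, and the resolution is to restrict attention to \emph{well-formed} expressions (\cref{def:well-formedness}) and to argue that well-formedness is preserved by reduction. A well-formed expression has no free loop variables and every $\sf{continue}$ is guarded by a branching expression, so $\sf{continue}\ k$ never occurs as the expression actually being reduced: the evaluation contexts $\ctx$ descend only into the \emph{condition} of an $\sf{if}$ and the \emph{scrutinee} of a $\sf{match}$, never into a branch, while the (lab) rule replaces each $\sf{continue}\ k$ inside a loop body by the entire loop $k : e$ rather than exposing it. Consequently the (Continue) case is vacuous for the configurations we reduce, and the same closedness argument removes bare $x$. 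Phrasing this invariant precisely and verifying that the substitution performed by (lab) preserves it is the only delicate part of the argument; the congruence and redex cases are routine once consistency and canonical forms are in hand.
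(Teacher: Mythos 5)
Your proposal follows essentially the same route as the paper's own proof: structural induction on $e$, a split into value, congruence and redex cases, use of $\Gamma \vdash h$ to transfer class and field agreement from the typing environment to the heap (so that (assign), (new), (fld), (call-d) and (call-ind) can fire), and canonical-forms reasoning for $\sf{bool}$ in the (If) case and for $L\ \sf{link}\ o$ in the (Case) case. The one place you go beyond the paper is the (Continue) case and bare parameters $x$: the paper's proof silently omits them (its cases are Assignment, New, Seq, Call-d, If, Match, Label only), even though $\sf{continue}\ k$ is typable by (Continue) whenever $\Omega(k) = \Gamma$, is not a value, and has no reduction rule, so the theorem as literally stated admits a counterexample unless one adds precisely the side condition you identify. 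Your repair --- restrict to well-formed, loop-closed expressions and show this is an invariant of reduction, since evaluation contexts $\ctx$ never descend into branches of $\sf{if}$ or $\sf{match}$ and the (lab) rule substitutes each $\sf{continue}\ k$ by the entire loop $k : e$ --- is the right one, and it is a strengthening of the paper's argument rather than a deviation from it; the cost is that you must carry this extra invariant through the congruence cases, which the paper avoids only by ignoring the problem.
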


\begin{proof}
See details in \cref{app:proof-progess}.
\end{proof}

\begingroup
\lstset{language=scalaprotocol}

\section{Implementation}
\label{sec:implementation}
In this section we present Papaya, an implementation of our framework and type system for a subset of the Scala programming language \cite{Odersky2021ScalaSpecification}. The  source code can be found on our Github repository \cite{ravier_2021}.

The Papaya tool is implemented as a plugin for the Scala compiler, meaning that programs with protocol violations will produce compilation errors and the program will not be compiled. 

Scala is an object-oriented language which compiles to JVM bytecode and consequently is compatible with Java code and applications, while introducing new language features such as lazy evaluation, immutability, type inference and pattern matching. %
These features make Scala an expressive and powerful high-level language that supports object-oriented programming, functional programming, and a mix of both. %

The main features supported by Papaya are:
\begin{itemize}
    \item \textbf{Control flow structures} Papaya supports users using loops, if-else statements, match statements, and functions.
    \item \textbf{Recursion} Papaya handles recursive function calls as described for the formalisation. 
    \item \textbf{Fields} Objects with typestates can be stored in class fields and dealt with appropriately.
    \item \textbf{Unrestricted aliasing} Papaya offers the user unrestricted aliasing of variables.
\end{itemize}

To compare with the earlier implementations of Mungo for Java, we can see that Papaya introduces new features:
\begin{itemize}
    \item \textbf{Uncertain states} For increased flexibility, Papaya allows multiple branches to result in different typestates, as long as the type state of all objects eventually become consistent. This is a deviation between the formalism of this paper and the implementation.
    \item \textbf{Unrestricted aliasing} Mungo enforces linearity in its program, disallowing the user to alias objects with a protocol. With unrestricted aliasing in Papaya, the user is free to alias as much as they want to.
\end{itemize}

The implemented algorithm follows the structure defined for the formalism, by analysing the program graph, starting from entry-point into the program and expanding the program graph upon reaching method calls. At each method call encountered during verification, the transition system of the protocol of the callee is consulted to ensure that the method call is currently allowed for the object.

This is different from the previous implementations of the Mungo tool. In the original implementation \cite{Kouzapas2016TypecheckingStMungo,Dardhaetal2017,Kouzapas2018TypecheckingJava,VoineaDG20} the tool infers the typestate of objects in the program and then checks that this respects the typestate defined for the classes. Since that version of the Mungo tool requires a linear treatment of references, the tool works similar to a classic type system for object-oriented language where each class is checked in isolation.
In a recent implementation of Mungo \cite{Mota2021}, the Java Checker Framework is used to analyse the control flow graph of a Java program, and perform typestate analysis. The tool allows for a modular approach to type checking, however the use of aliases in the new implementation is still restricted to fractional permissions where write-access is only given to linear references.

\begin{example}
We return to our bank account example. This time we write a working implementation in Scala and use Papaya to verify the correctness of the implementation. 

In \cref{lst:annotation} we show the implementation of the bank account introduced earlier. The typestate is specified using the \lstinline{@Typestate} annotation where the argument refers a name of a singleton object defining the behaviour of a class. 

\begin{lstlisting}[style=color,language=scalaprotocol,label={lst:annotation}, caption={Implementation of the BankAccount with attached protocol}]
@Typestate("BankAccountProtocol")
class BankAccount() {
    var balance:Float = 0
    def fill(amount:Float):Unit = 
        { balance = amount }
    def get():Float = balance
    def applyInterest(ir:Float):Unit = {
        balance = balance * ir
    }
}
\end{lstlisting}

The protocol is written in a Scala-like domain specific language. The protocol of the bank account is shown in \cref{lst:protocol}. Notice that the implementation uses state equations (i.e. init = setMoney(Float) → intermediate) instead of the recursive definitions used in the formalism to describe state changes. This change is introduced to allow programmers to specify their protocols more easily.

\begin{lstlisting}[style=color,language=scalaprotocol,label={lst:protocol}, caption={Protocol for the BankAccount class}]
object BankAccountProtocol extends ProtocolLang with App {
  in("init")
    when("setMoney(Float)") 
        goto "intermediate"
  in("intermediate")
    when("applyInterest(Float)") 
        goto "filled"
  in("filled")
    when("getMoney()") 
        goto "end"
  in("end")
    end()
}
\end{lstlisting}

The protocol specifies that the BankAccount starts in the \lstinline[breaklines=false]{"init"} state and can perform one transition with a call to \lstinline{setMoney(Float)} to go to the \lstinline{"intermediate"} state. We can see that it then has one possible transition to the \lstinline{"filled"} state, whence it has one last possible transition to the \lstinline{"end"} state. Comparing this to the previously defined usage $\{\sf{setMoney}; \{\sf{applyInterest}; \allowbreak \{\sf{getMoney};\allowbreak \uend\}\}\}$ we see that the two descriptions are equivalent.

\begin{lstlisting}[style=color,language=scalaprotocol,label={lst:remaining_classes}, caption={Implementations of two classes that will use a shared bank account},firstnumber=11]
class DataStorage() {
    var money:BankAccount = null;
    def setMoney(m:BankAccount):Unit = 
        { money = m}
    def store():Unit = {
        var amount = money.get()
        println(amount)
        // write to the database
    }
}
class SalaryManager() {
    var money:BankAccount = null;
    def setMoney(m:BankAccount):Unit = 
        { money = m}
    def addSalary(amount:Float):Unit = {
        money.fill(amount)
        money.applyInterest(1.02f)
    }
}
\end{lstlisting}

In \cref{lst:remaining_classes} we show the implementation of the two remaining classes previously introduced, and in \cref{lst:aliasing-scala} we show how aliasing is achieved by providing the \lstinline{account} reference to both the salary manager and the data store. 

\begin{lstlisting}[style=color,language=scalaprotocol,label={lst:aliasing-scala}, caption={Program segment that uses aliasing}]
object Demonstration extends App {
    val account = new BankAccount
    val manager = new SalaryManager
    val storage = new DataStorage
    manager.setMoney(account)
    storage.setMoney(account)
    manager.addSalary(5000)
    storage.store()
}
\end{lstlisting}

In the implementation we handle the layer of indirection between references (with potential aliasing) and objects similarly to the treatment in the type system in \cref{sec:typesystem} but with more information tracked in order to aid debugging and error handling. This means that the three references introduced in \cref{lst:aliasing-scala} are tracked independently but all point to the same underlying instance, as shown in \cref{fig:InstanceNAliases}.

\begin{figure}[htpb]
    \centering
    \includegraphics[width=.9\columnwidth,keepaspectratio, trim={2cm 18cm 7cm 2.5cm},clip]{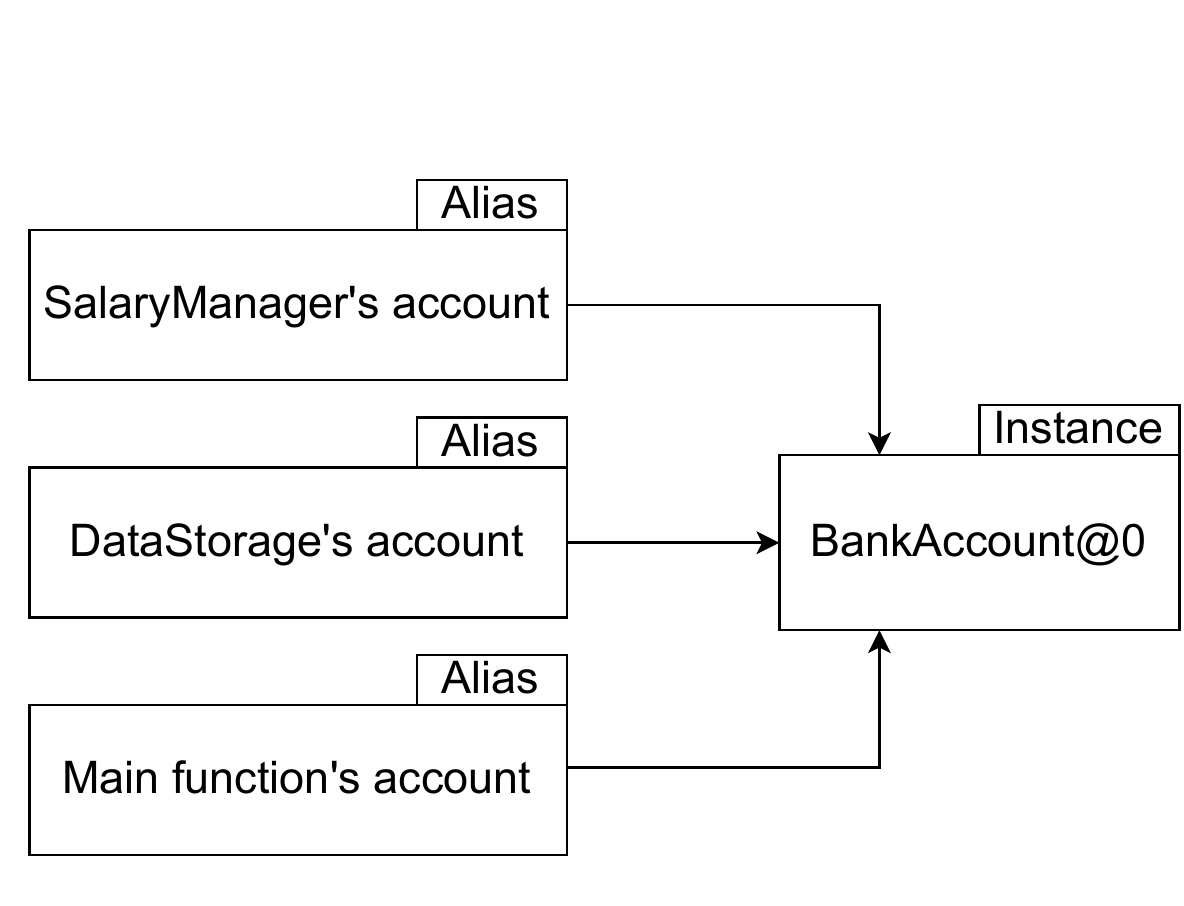}
    \caption[Example of one Instance with multiple Aliases pointing to it.]{Example of the structure of Instances and Aliases in the BankAccount example. Here we have three Aliases pointing to a single BankAccount Instance. The Instance has an "@0" ID to differentiate it from other potential BankAccount Instances. Each Alias is identified by its name and scope.}
    \label{fig:InstanceNAliases}
\end{figure}

\end{example}
\endgroup

\section{Related Work}
\label{sec:relatedwork}
\paragraph{Session types}
Session types \cite{Honda1993,Honda1998LanguageProgramming,Honda2008MultipartyTypes} were introduced to ensure type-safe structured communication between multiple parties. The process sending data and the process receiving data must agree on the type of data being transmitted. The concept of session types has been also explored for object oriented languages \cite{Dezani-Ciancaglini2006SessionLanguages, Vasconcelos2011}. A particular application of session types for an object oriented language is that in Bica \cite{Gay2010ModularProgramming} where session types are used to type communication on channels, but also to perform type-safe interaction with objects themselves. In terms of message passing in object-oriented languages, we can view a session type as a description of the messages we can send to a particular object, corresponding to an ordering of method calls. In the simplest setting we can imagine object initialisation as instantiating a communication channel between the new object and the caller, and subsequent method calls as sending messages on this channel. Scalas et al. \cite{ScalasY16,Scalas2017AProgramming} integrate binary and multiparty session types in Scala and implement it as a library.

\paragraph{Typestates}
While originally introduced to track value initialisation \cite{Strom1986Typestate:Reliability}, typestates have been explored extensively for object-oriented programming. The approach described for Bica is one approach for typestates in object-oriented languages that inspired the line of research on Mungo \cite{mungo_website, Gay2015ModularObjects, Kouzapas2018TypecheckingJava, BravettiBehaviouralTypes2020, Mota2021JavaChecker, Mota2021}. In this line of work, typestates are based on session types and describe the permitted sequence of method calls, in a syntax similar to session types.

Plaid \cite{Aldrich2009Typestate-orientedProgramming, Sunshine2011} introduces the concept of \textit{Typestate-oriented programming} wherein typestates form the basis for objects, rather than class descriptions. As operations are performed on object, they transition between states, and the set of available operations evolve, ensuring that methods can only be called on objects that are in a state that implements the method.

The Fugue protocol checker \cite{DeLine2004, Deline2004Objects} extends class definitions for the Common Language Runtime (CLR) \cite{microsoft_2020}  with state machines. They use pre and postconditions to describe the transitions between states and preconditions are used as guards, to ensure that methods are only called when the object is in a state that allows the method calls. 

Lastly, the work on typestates for concurrent object-oriented languages \cite{Padovani2018Deadlock-FreeProgramming, Crafa2017} uses typestates to reason about protocol conformance, but also properties such as deadlock freedom.

\paragraph{Aliasing and Typestates}
We have seen multiple approaches to combining typestates with object-oriented programming, but each approach handles the presence of aliasing differently.

Vault \cite{DeLine2001EnforcingSoftware} introduces \textit{tracked types} where a unique key is created for each object, and operations can only be performed on the object by the current holder of the key.

In an extension to the Vault language \cite{Fahndrich2002} the concepts of \textit{adoption} and \textit{focus} are used for introducing aliases. The adoption construct allows a linear value (the adoptee) to be converted into a nonlinear reference for the duration of the adopters lifetime. As linear resources of the adoptee cannot be accessed through the nonlinear type, they introduce the focus operation to temporarily convert the nonlinear type into a linear type, by ensuring that in the linear scope, no other aliases can witness the operations, and that the object is left in a consistent state after the operation, so that the operations remains invisible to other aliases. 

Later, in the work on Fugue \cite{DeLine2004} they allow objects to be marked \textit{NotAliased} and \textit{MayBeAliased}. In the case of an object being marked \textit{NotAliased} the object is treated linearly,  whereas objects marked \textit{NotAliased} are tracked to see if they can \textit{escape} from their context (by method calls or assignment, etc.) and emits a warning in case of unsafe aliasing.

Multiple approaches to aliasing have been introduced for the Plaid language. Bierhoff and Aldrich \cite{Bierhoff2007} present a fine-grain approach to aliasing. The authors note that an approach such as the one used in Fugue must be able to reason about all aliases to allow state change to an object, hence limiting nonlinear objects to simple operations. Instead they propose a collection of five permissions such as \textsf{unique} (single reference with read/write permissions), \textsf{share} (one reference has read/write permissions, other references has read permissions) or the inverse \textsf{pure} (read access while other reference has read/write permissions). For the different permissions, they introduce the concept of \textit{permission splitting} and \textit{permission joining}, where one alias with a permission can be split into two aliases that are equally or more restricted than the original. Similarly, for joining, two permissions can be merged back into a potentially less restrictive permission. To handle an arbitrary number of aliases, and ensure that all aliases can be collected to regain write access, they introduce fractions denoting how many times a permission has been split, and conversely when all fractions has been recovered.

A typesystem for a language inspired by Plaid \cite{Militao2010} uses concepts from behavioural separation \cite{Caires2013} to reason about type-states. In this language, classes are composed of \textit{views}, and each view contains a subset of the fields of the class. Through \textit{view equations}, views can be composed or decomposed into a number of other views, similar to permission splitting and joining as previously described. Through view decomposition, each alias is associated with a single view, and hence also follows the view equations. Similar to the previous work on aliasing in Plaid they use fractions to keep track of splits when allowing an unbounded number of aliases, so they can ensure that all aliases are recovered before any updates to the full object.  

Mungo generally treats objects as linear values, where only a single reference to an object can exist. While enforcing linearity allows for a common treatment of all object references, it is a deviation from real-world programs where aliasing is used in programming patterns for sharing data etc. Accordingly, work has been undergoing to lessen this constraint. A recent implementation of the Mungo tool \cite{Mota2021JavaChecker} supports access permissions similar to those described for Plaid. 

In another treatment of aliasing for Mungo \cite{Jakobsen2020, Golovanov2021}, the language of usages is extended with a parallel construct $(\U_1 \mid \U_2).\U_3$ where an object can be aliased into two references, with usages $\U_1$ and $\U_2$ respectively. After completion of the local protocol, only a single reference (with usage $\U_3$) exists. This approach is analogous to the view-equations used in \cite{Militao2010}. 

Common between the approaches to aliasing described in this section is that they adopt a local treatment of aliasing, allowing them to preserve compositionality of the type system, whereas the treatment in this paper is global. The local treatment allows for greater flexibility in a larger system, where components can be replaced without having to re-verify the entire system, whereas the global approach allows for the maximum flexibility for the programmer's work with aliasing.

A typestate verification framework for Java with support for aliasing has been presented in \cite{Fink2008EffectiveAliasing}. The tool makes sound approximations to scale to larger programs, at the cost of precision (increased false positives).

\section{Conclusion and Future Work}
\label{sec:conclusion}

In this paper we have explored a global approach to reasoning about unrestricted aliasing in the presence of typestates. We have shown the standard soundness properties about the type system, namely subject reduction and progress. Furthermore, we have shown the protocol conformance property--which ensures that protocols defined for classes are respected by instantiated objects, and that no protocol deviation occurs--and the protocol completion property--which ensures that protocols are completed for all objects, meaning that after termination of a program all objects have successfully completed their protocol.

The language presented in this paper is a small object-oriented language that does not correspond directly to any real-life programming language. However it does have similarities to the low level JVM bytecode language. As future work, we plan to explore this similarity in an attempt at integrating typestates in JVM bytecode. 

As we use a global approach of type checking the entire program graph, as opposed to checking each class in isolation, the run-time may suffer for larger programs. To combat this, it would be interesting to split classes into a linear section, and an unrestricted section. Then values that are treated by the class as linear objects (where only a single reference exists at all times) can be checked in isolation, before the global analysis checks the unrestricted sections of all classes. We leave it to future work to check whether such a split of a class can be determined without programmer annotations, and to explore how to integrate the previous approaches to type checking linear objects can be integrated as a step before the global analysis.

\section{Acknowledgements}
Research supported by the EPSRC programme grant ``From Data Types to Session Types: A Basis for Concurrency and Distribution" EP/K034413/1 (ABCD), and EU HORIZON 2020 MSCA RISE project 778233 ``Behavioural Application Program Interfaces'' (BehAPI).
We thank Simon Fowler for his valuable comments on the paper, Alceste Scalas for his helpful tips on Scala and Elena Giachino for her (implicit) suggestion on the name Papaya. 
\bibliography{references}

\appendix 

\onecolumn

\section{Proof for Unfolding Labelled Expressions}

\begin{lemma}[Weakening of label-environment]
\label{lemma:weakening}
If $\Theta;\Omega;\Gamma \vdash e : T \dashv \Gamma'$ and $k\not\in FL(e)$ then $\Theta;\Omega, k \mapsto \Gamma'';\Gamma \vdash e : T \dashv \Gamma'$
\end{lemma}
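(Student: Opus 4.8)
The plan is to proceed by induction on the structure of the typing derivation $\mathcal{D}$ of $\Theta;\Omega;\Gamma \vdash e : T \dashv \Gamma'$. Inducting on the derivation rather than on the syntax of $e$ is essential, because the call rules \rt{Call-d} and \rt{Call-ind} type-check an expanded method body $e'$ that is not a syntactic subterm of $o.m(e)$; the derivation for that body is nonetheless an immediate sub-derivation, so the induction stays well-founded (the recursive base cases \rt{Call-d-rec} and \rt{Call-ind-rec} do not re-expand). The guiding intuition is that the label environment $\Omega$ is \emph{read} by exactly one rule, \rt{Continue} (the rule \rt{Label} only extends it), and only at the particular label occurring in that \sf{continue}. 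Since $k \notin FL(e)$, the freshly added binding $k \mapsto \Gamma''$ is never the one looked up, so it cannot alter any leaf of the derivation.

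For the routine cases—\rt{Unit}, \rt{Bool}, \rt{Field}, \rt{Null}, \rt{Obj}, \rt{Enum}, \rt{Const}, \rt{New}, \rt{Assign}—the environment $\Omega$ plays no role, so rebuilding the same rule instance with $\Omega, k \mapsto \Gamma''$ in place of $\Omega$ yields the desired judgment at once. For the compositional rules \rt{Comp}, \rt{If}, \rt{Case} and the four call rules, I would use that $FL$ distributes over subexpressions (e.g. $FL(e_1; e_2) = FL(e_1) \cup FL(e_2)$), so $k \notin FL(e)$ forces $k$ to be absent from the free labels of every premise subexpression; the induction hypothesis then applies to each premise, and the rule is reassembled. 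For the call rules the body $e'$ is, by well-formedness of methods (\cref{def:well-formed methods} together with \cref{def:well-formedness}), free of loop variables, so $k \notin FL(e'\substitute{\this}{o}\substitute{x}{\cdots})$ holds trivially and the induction hypothesis again applies with the unchanged $\Omega$. The only genuinely computational case is \rt{Continue}, where $e = \sf{continue}\ k_0$ and the premise is $\Omega(k_0) = \Gamma$; since $k \notin FL(\sf{continue}\ k_0) = \{k_0\}$ we have $k \neq k_0$, whence $(\Omega, k \mapsto \Gamma'')(k_0) = \Omega(k_0) = \Gamma$, so the lookup is untouched and the rule fires with the same arbitrary $T$ and $\Gamma'$.

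The main obstacle, and the sole place needing real care, is the \rt{Label} case, where $e = k_0 : e_0$ and the premise is $\Theta; \Omega, k_0 \mapsto \Gamma; \Gamma \vdash e_0 : T \dashv \Gamma'$. Here I would exploit $FL(k_0 : e_0) = FL(e_0) \setminus \{k_0\}$, so $k \notin FL(e)$ leaves two possibilities. If $k = k_0$, the binding introduced by the rule shadows the one we are adding, so $(\Omega, k \mapsto \Gamma''), k_0 \mapsto \Gamma$ and $\Omega, k_0 \mapsto \Gamma$ agree on every lookup, and the original premise derivation already witnesses the goal. If $k \neq k_0$, then $k \notin FL(e_0)$, so the induction hypothesis on the premise gives $\Theta; (\Omega, k_0 \mapsto \Gamma), k \mapsto \Gamma''; \Gamma \vdash e_0 : T \dashv \Gamma'$; because $k$ and $k_0$ are distinct keys the two bindings commute, and reapplying \rt{Label} delivers $\Theta; \Omega, k \mapsto \Gamma''; \Gamma \vdash k_0 : e_0 : T \dashv \Gamma'$. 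Making this bookkeeping precise—fixing the convention that lookup uses the most recent binding, and justifying commutation of bindings on distinct keys—is the crux of the argument; everything else is a mechanical reassembly of rule instances under the induction hypothesis.
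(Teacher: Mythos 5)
Your proposal is correct, and its case analysis (the routine rules, the compositional rules, and the two genuinely interesting cases \rt{Continue} and \rt{Label}) matches what the paper's terse ``simple structural induction in $e$'' must implicitly contain. The one substantive difference is the induction measure, and here your choice is not merely stylistic but arguably a repair of the paper's stated method: the rules \rt{Call-d} and \rt{Call-ind} have a premise typing the expanded method body $e'\substitute{\this}{o}\substitute{x}{\cdots}$, which is \emph{not} a syntactic subterm of $o.m(e)$, so structural induction on $e$ literally provides no induction hypothesis for that premise, whereas induction on the typing derivation does (the derivation is finite, with \rt{Call-d-rec} and \rt{Call-ind-rec} as base cases, so this is well-founded). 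Your appeal to well-formedness of methods to get $k \notin FL(e')$ for the body is exactly the side condition needed to invoke the hypothesis there, and the paper's one-line proof is silent on it. Your treatment of \rt{Label} under shadowing ($k = k_0$) and commutation of distinct bindings is also right, modulo the convention---which you correctly flag as needing to be fixed---that lookup in $\Omega$ returns the most recent binding, or equivalently that extension with an existing key is an update. In short: same skeleton as the paper, but with the induction principle made sound and the two hidden side conditions (free labels of method bodies, and shadowing in \rt{Label}) made explicit; what the paper's brevity buys is only brevity.
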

\begin{proof}
Simple structural induction in $e$.
\end{proof}

\begin{lemma}[Strengthening of label-environment]
\label{lemma:strengthening}
If $\Theta;\Omega, k \mapsto \Gamma'';\Gamma \vdash e : T \dashv \Gamma'$ and $k\not\in FL(e)$ then $\Theta;\Omega;\Gamma \vdash e : T \dashv \Gamma'$
\end{lemma}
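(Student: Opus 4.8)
The plan is to prove the statement by structural induction on $e$, exactly mirroring the proof of the dual \cref{lemma:weakening}. The guiding observation is that the label environment $\Omega$ is only \emph{read} by the (Continue) rule and only \emph{extended} by the (Label) rule; every other typing rule simply threads $\Omega$ unchanged into its premises. Consequently, for all the ``pass-through'' constructs the argument is routine, and the entire content of the lemma is concentrated in these two rules.

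First I would dispatch the leaf cases (Unit), (Bool), (Null), (Const), (Enum), (Obj), (Field), and (New), where no typing premise mentions $\Omega$: the same rule instance derives the conclusion verbatim over the smaller environment $\Omega$. For the compound pass-through cases (Assign), (Comp), (If), (Case), and the four call rules, I would apply the induction hypothesis to each immediate sub-derivation. This is sound because $FL$ distributes over subexpressions for these constructs, so $k\not\in FL(e)$ yields $k\not\in FL(e_i)$ for every immediate subexpression $e_i$, and $\Omega$ is passed identically to each premise; reassembling with the same rule closes the case.

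The first genuinely interesting case is (Continue), where $e=\sf{continue}\ k'$ and the premise is $(\Omega, k \mapsto \Gamma'')(k')=\Gamma$. Here $FL(e)=\{k'\}$, so the hypothesis $k\not\in FL(e)$ forces $k\neq k'$; hence $(\Omega, k\mapsto\Gamma'')(k')=\Omega(k')$, and the lookup --- the only place $\Omega$ is ever consulted --- is unaffected by deleting the $k$-binding. The (Continue) rule therefore still applies over $\Omega$, giving the required judgment.

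The main obstacle is the (Label) case $e = k' : e'$, where the binder interacts with the deleted label. The premise is $\Theta;(\Omega, k\mapsto\Gamma''), k'\mapsto\Gamma;\Gamma \vdash e' : T \dashv \Gamma'$, and $FL(e)=FL(e')\setminus\{k'\}$, so $k\not\in FL(e)$ splits into two subcases. If $k\neq k'$, then $k\not\in FL(e')$; I reorder the two independent bindings to expose $k\mapsto\Gamma''$ as the outermost one, apply the induction hypothesis to strip it, and re-apply (Label). If $k=k'$, the inner binding $k'\mapsto\Gamma$ shadows $k\mapsto\Gamma''$, so the premise environment $(\Omega, k\mapsto\Gamma''), k\mapsto\Gamma$ and the target environment $\Omega, k\mapsto\Gamma$ induce the same lookup on every label; the existing derivation of the premise is thus already a derivation over $\Omega, k\mapsto\Gamma$, and (Label) closes the case. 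I expect this shadowing subcase to be the only delicate point: it hinges on treating $\Omega$ up to the lookup function it denotes (equivalently, assuming $\Omega$ keeps a single binding per label with latest-binding semantics). Everything else is bookkeeping, so the induction goes through uniformly.
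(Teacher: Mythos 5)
Your proposal is correct and takes essentially the same approach as the paper, whose entire proof is the statement ``Simple structural induction in $e$.'' Your case analysis---in particular the (Continue) case using $k \neq k'$ and the shadowing subcase of (Label)---simply fills in the details the paper leaves implicit, so the two proofs coincide in substance.
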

\begin{proof}
Simple structural induction in $e$.
\end{proof}

\begin{lemma}[Substitution]
\label{lemma:substitution}
\noindent If 
\begin{itemize}
    \item $\Theta;\Omega;\Gamma \vdash k : e : \sf{void} \dashv \Gamma'$
    \item $k : e$ is well formed, 
    \item $\Theta';\Omega', k \mapsto \Gamma; \Gamma'' \vdash e' : T \dashv \Gamma'''$,
    \item $k \not\in BL(e')$, 
    \item $e' \in SUB(e)$, and
    \item $\Omega \subseteq \Omega'$
\end{itemize} 
Then 
\begin{itemize}
    \item $\Theta';\Omega';\Gamma'' \vdash e'\substitute{\sf{continue}\ k}{k : e} : T \dashv \Gamma'''$
\end{itemize}
\end{lemma}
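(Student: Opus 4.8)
The plan is to prove the statement by structural induction on $e'$, inverting the typing derivation of the third hypothesis, $\Theta';\Omega', k \mapsto \Gamma; \Gamma'' \vdash e' : T \dashv \Gamma'''$, in each case. Throughout, I would use \cref{lemma:weakening,lemma:strengthening} to move between the label environments $\Omega'$ and $\Omega', k \mapsto \Gamma$, relying on $\Omega \subseteq \Omega'$ and $k \not\in BL(e')$ to guarantee that the substituted copies of $k : e$ typecheck under $\Omega'$ without capturing or dropping any binding.

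First I would dispatch the base cases that do not mention $k$: if $e'$ is a value, a field or parameter read, $\sf{unit}$, a constant, or $\sf{continue}\ k'$ with $k' \neq k$, then the substitution leaves $e'$ unchanged and the binding $k \mapsto \Gamma$ is never consulted, so \cref{lemma:strengthening} removes the spurious binding and yields the goal directly.

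The central case is $e' = \sf{continue}\ k$, where $e'\substitute{\sf{continue}\ k}{k : e} = k : e$. Inverting the (Continue) rule gives $(\Omega', k \mapsto \Gamma)(k) = \Gamma''$, hence $\Gamma'' = \Gamma$, and it remains to re-derive $\Theta';\Omega';\Gamma \vdash k : e : T \dashv \Gamma'''$. I obtain this from the first hypothesis, $\Theta;\Omega;\Gamma \vdash k : e : \sf{void} \dashv \Gamma'$, by weakening its label environment from $\Omega$ to $\Omega'$ via \cref{lemma:weakening} and reconciling the recursion environments analogously. The delicate point — what I expect to be the main obstacle — is matching the type and output: (Continue) permits the arbitrary $T$ and $\Gamma'''$, whereas $k : e$ carries the fixed type $\sf{void}$ and output $\Gamma'$. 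This is exactly where well-formedness of $k : e$ is needed: conditions~1 and~3 of \cref{def:well-formedness} force every $\sf{continue}\ k$ into tail position inside a guarding $\sf{if}$ or $\sf{match}$, and the (If)/(Case) rules require all sibling branches to share a common type and output. Tracing these constraints up to the exit branch guaranteed by condition~4 pins the type and output of each $\sf{continue}\ k$ to those of the whole loop, namely $\sf{void}$ and $\Gamma'$; thus $T = \sf{void}$ and $\Gamma''' = \Gamma'$ at this occurrence, and the first hypothesis supplies precisely the required derivation.

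Finally, for the compound cases — sequential composition, $\sf{if}$, $\sf{match}$, the method-call rules, assignment, and nested labels $k' : e''$ with $k' \neq k$ — the substitution commutes with the term constructor, so I would invert the corresponding typing rule, apply the induction hypothesis to each immediate subexpression (each of which is again in $SUB(e)$ and avoids $k$ among its bound labels), and reassemble the derivation with the same rule. The nested-label case uses $k \not\in BL(e')$ to ensure $k' \neq k$, so that no inner binding shadows the loop label and the induction hypothesis applies verbatim. Reassembling the branch cases is where the type/output alignment established for $\sf{continue}$ pays off, since the reconstructed branches must still agree.
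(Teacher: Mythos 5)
Your proposal is correct and takes essentially the same route as the paper's proof: structural induction on $e'$, dispatching the substitution-free cases via \cref{lemma:strengthening}, pushing compound cases through the induction hypothesis, and resolving the central $e' = \sf{continue}\ k$ case by noting $\Gamma'' = \Gamma$ and using well-formedness of $k : e$ to pin the otherwise-free $T$ and $\Gamma'''$ to $\sf{void}$ and $\Gamma'$, finishing with \cref{lemma:weakening}. The only (cosmetic) difference is that you explicitly flag the reconciliation of the recursion environments $\Theta$ versus $\Theta'$, which the paper's proof leaves implicit.
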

\begin{proof}

\case{new, unit, field, par, null, true, false, enum}

No substitution occurs, hence it follows from \cref{lemma:strengthening}.

\bigskip 

\case{Assign}
Assume $\Theta';\Omega', k \mapsto \Gamma;\Gamma'' \vdash o.f = e'' : T \vdash \Gamma'''$. We know from (Assign) $\Theta';\Omega', k \mapsto \Gamma; \Gamma'' \vdash e'' : T' \dashv \Gamma''''$ and $\Gamma''' = \Gamma''''[o.f \mapsto \sf{vtype}(T)]$. From the induction hypothesis we know $\Theta';\Omega', e''\substitute{\sf{continue}\ k}{k : e} : T' \dashv \Gamma''''$, hence we can use (Assign) to conclude $\Theta';\Omega';\Gamma'' \vdash o.f = e'' : \sf{void} \dashv \Gamma'''$.

\bigskip 

\case{Call-d}

Assume $\Theta';\Omega', k \mapsto \Gamma;\Gamma'' \vdash o.m(e'') : T \vdash \Gamma'''$. From (Call-d) we know that $\Theta';\Omega', k \mapsto \Gamma;\Gamma'' \vdash e'' : T' \vdash \Gamma''''$. With the induction hypothesis we can conclude $\Theta';\Omega';\Gamma'' \vdash e''\substitute{\sf{continue}\ k}{k : e} : T' \vdash \Gamma''''$, hence we can conclude with (Call-d) that $\Theta';\Omega';\Gamma'' \vdash o.m(e'')\substitute{\sf{continue}\ k}{k : e} : T \vdash \Gamma'''$.

\bigskip 

\case{Call-ind}
Case similar to previous.

\bigskip 

\case{If}
Assume $\Theta';\Omega', k \mapsto \Gamma;\Gamma'' \vdash \sif{e''}{e'''}{e''''} : T \vdash \Gamma'''$. From (If) we know $\Theta';\Omega', k \mapsto \Gamma;\Gamma'' \vdash e'' : \sf{Bool} \vdash \Gamma''''$, $\Theta';\Omega', k \mapsto \Gamma;\Gamma'''' \vdash e''' : \sf{Bool} \vdash \Gamma'''$, and $\Theta';\Omega', k \mapsto \Gamma;\Gamma'''' \vdash e'''' : \sf{Bool} \vdash \Gamma'''$. Using the induction hypothesis three times, we get $\Theta';\Omega'\Gamma'' \vdash e''\substitute{\sf{continue}\ k}{k : e} : \sf{Bool} \vdash \Gamma''''$, $\Theta';\Omega';\Gamma'''' \vdash e'''\substitute{\sf{continue}\ k}{k : e} : \sf{Bool} \vdash \Gamma'''$, and $\Theta';\Omega';\Gamma'''' \vdash e''''\substitute{\sf{continue}\ k}{k : e} : \sf{Bool} \vdash \Gamma'''$, allowing us to use (If) to conclude $\Theta';\Omega';\Gamma'' \vdash \sif{e''}{e'''}{e''''}\substitute{\sf{continue}\ k}{k : e} : T \vdash \Gamma'''$.
\bigskip 

\case{Match}
Similar to previous case.
\bigskip 

\case{Label}
Assume $\Theta';\Omega', k \mapsto \Gamma; \Gamma'' \vdash k' : e'' : T \dashv \Gamma'''$. Since $k \not\in BL(k' : e'')$ we know that $k \neq k'$. From (Lab) we know $\Theta';\Omega', k \mapsto \Gamma, k' \mapsto \Gamma'';\Gamma'' \vdash e'' : \sf{void} \dashv \Gamma'''$. From our induction hypothesis we can conclude $\Theta';\Omega', k' \mapsto \Gamma'';\Gamma''\vdash e''\substitute{\sf{continue}\ k}{k : e} : \sf{void} \dashv \Gamma'''$, hence we can use (Lab) to conclude $\Theta';\Omega'; \Gamma'' \vdash k' : e''\substitute{\sf{continue}\ k}{k : e} : T \dashv \Gamma'''$.

\bigskip 

\case{Seq}

Assume $e' = e'';e'''$. Since $\Theta';\Omega', k \mapsto \Gamma;\Gamma'' \vdash e'';e''' : T \dashv \Gamma''' $ we know from (Seq) that $\Theta';\Omega', k \mapsto \Gamma;\Gamma'' \vdash e'' : T' \dashv \Gamma''''$ and $\Theta';\Omega', k \mapsto \Gamma;\Gamma'''' \vdash e''' : T \dashv \Gamma''$. We can use the induction hypothesis to conclude $\Theta';\Omega';\Gamma'' \vdash e''\substitute{\sf{continue}\ k}{k : e} : T' \dashv \Gamma''''$ and $\Theta';\Omega';\Gamma'''' \vdash e'''\substitute{\sf{continue}\ k}{k : e} : T \dashv \Gamma''$, hence it follows from (Seq) that 
$\Theta';\Omega';\Gamma'' \vdash e'';e'''\substitute{\sf{continue}\ k}{k : e} : T \dashv \Gamma''' $.

\bigskip 

\case{Continue}

If $e' = \sf{continue}\ k'$ where $k\neq k'$ then no substitution occurs, and the lemma is trivially true. So now assume $e' = \sf{continue}\ k$, hence we must show that $\Theta';\Omega';\Gamma'' \vdash k : e : \sf{void} \dashv \Gamma'''$.
From (con) we know that $\Gamma'' = \Gamma$. Since $e$ is well-formed, and $e' \in SUB(e)$ then the free choice of $T$ and $\Gamma'''$ is restricted to $\sf{void}$ and $\Gamma'$ respectively. The reason for this is, that well-formedness ensures that any continue-expression is guarded by a branching statement with a least one non-terminating branch. And as all branches must result in the same type and final environment, the type and final environment from the continue expression must be chosen to match the environment and type of the terminating branch. Hence it remains to show $\Theta';\Omega';\Gamma \vdash k : e : \sf{void} \dashv \Gamma'$. This follows from \cref{lemma:weakening}, since we know that $\Omega' \subseteq \Omega$ and $k \not\in FL(e)$ due to well-formedness.
\end{proof}

\begin{lemma}[Unfolding]
\label{lemma:unfolding}
Assume $\Theta;\Omega;\Gamma \vdash k : e : \sf{void} \dashv \Gamma''$, $\Gamma \vdash h$, and $\<h, k : e\> \ss \<h, e\substitute{\sf{continue}\ k}{k :e}\>$. Show that $\exists \Gamma'$ such that $\Theta;\Omega;\Gamma' \vdash e\substitute{\sf{continue}\ k}{k :e} \dashv \Gamma''$. 
\end{lemma}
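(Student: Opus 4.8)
The plan is to obtain this as an almost immediate corollary of the Substitution Lemma (\cref{lemma:substitution}), applied once to the entire loop body. Observe first that the reduction $\<h, k : e\> \ss \<h, e\substitute{\sf{continue}\ k}{k : e}\>$ is just the unconditional (lab) rule, so it imposes no side conditions of its own; moreover (lab) leaves the heap untouched, so the consistency hypothesis $\Gamma \vdash h$ plays no part in deriving the typing judgement and is carried only because the lemma is invoked from within subject reduction. The real content is therefore purely a statement about the type system, and it reduces to a single structural fact already packaged in \cref{lemma:substitution}.

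First I would invert the derivation of $\Theta;\Omega;\Gamma \vdash k : e : \sf{void} \dashv \Gamma''$. Since (Label) is the only rule typing a labelled expression, the derivation must end in (Label), whose premise yields $\Theta;\Omega, k \mapsto \Gamma;\Gamma \vdash e : \sf{void} \dashv \Gamma''$; note that $k$ is bound precisely to the entry environment $\Gamma$, which is exactly the shape demanded by the third hypothesis of \cref{lemma:substitution}. Next I would instantiate \cref{lemma:substitution} with $e' \eqdef e$, taking $\Theta' \eqdef \Theta$ and $\Omega' \eqdef \Omega$ and identifying the lemma's entry and exit environments for $e'$ with $\Gamma$ and $\Gamma''$. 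Its six hypotheses then read off directly: the labelled expression is well typed (our assumption, with the lemma's final environment taken to be $\Gamma''$); $k : e$ is well formed (from the standing well-formedness assumption on the program); the premise $\Theta;\Omega, k \mapsto \Gamma;\Gamma \vdash e : \sf{void} \dashv \Gamma''$ obtained by inversion; $k \not\in BL(e)$; $e \in SUB(e)$ by reflexivity; and $\Omega \subseteq \Omega$ trivially. The conclusion of \cref{lemma:substitution} is then $\Theta;\Omega;\Gamma \vdash e\substitute{\sf{continue}\ k}{k : e} : \sf{void} \dashv \Gamma''$, so choosing $\Gamma' \eqdef \Gamma$ discharges the existential and finishes the argument. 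It is reassuring that the witness is $\Gamma' = \Gamma$ itself: unfolding the loop body once does not alter the entry state.

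The only genuinely delicate points are the two side conditions fed to \cref{lemma:substitution}, namely $k \not\in BL(e)$ and the well-formedness of $k : e$. Both are consequences of the global well-formedness assumption of \cref{def:well-formedness} together with the convention that labels are not shadowed, so that $\sf{continue}\ k$ inside $e$ can only refer to the enclosing binding being unfolded; the guardedness clause also ensures the substitution does not move a $\sf{continue}$ into an unintended position. Beyond verifying these, the main care required is bookkeeping: the metavariables $\Gamma, \Gamma'', \Gamma'''$ of \cref{lemma:substitution} overlap confusingly with those of the present statement, so I would be careful to match the entry environment of $e$, the target of $k$ in the label environment, and the shared final environment $\Gamma''$ correctly. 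Once the single inversion of (Label) is performed, everything else is immediate.
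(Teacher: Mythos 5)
Your proposal is correct and follows essentially the same route as the paper's own proof: invert the (Label) rule to obtain $\Theta;\Omega, k \mapsto \Gamma;\Gamma \vdash e : \sf{void} \dashv \Gamma''$, use well-formedness of $k : e$ to get $k \not\in BL(e)$, and apply \cref{lemma:substitution} to conclude with witness $\Gamma' = \Gamma$. Your version merely spells out the instantiation ($e' = e$, $\Theta' = \Theta$, $\Omega' = \Omega$, $e \in SUB(e)$ by reflexivity, $\Omega \subseteq \Omega$) in more detail than the paper does.
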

\begin{proof}
From (Lab) we know that $\Theta;\Omega, k \mapsto  \Gamma;\Gamma \vdash e : \sf{void} \dashv \Gamma''$. Because $k : e$ is well-formed, we know that $k \not\in BL(e)$. We can then use \cref{lemma:substitution} to conclude that $\Theta;\Omega;\Gamma \vdash e\substitute{\sf{continue}\ k}{k :e} \dashv \Gamma''$
\end{proof}
\section{Proof of Unfolding Recursive Calls}
\label{app:recursion_unfolding_proof}

\recursionlemma*

\begin{proof}
Induction in $e$.

\case{Call} Assume $e = o'.m'(e')$. If $m \neq m'$ or $o \neq o'$ then it is not a recursive call. From (Call-d) we know $\Theta, o.m \mapsto \typingbound; \emptyset; \Gamma \vdash e' : T \dashv \Gamma''$, $\Gamma''(o')=(o'[C, \U], \lambda)$, $\U \trans[m'] \U'$, $\sf{agree}(t, T)$, $\sf{fun}\ m(x : t) : t' \{\exprbody \} \in D(C).\sf{methods}$, and $\Theta, o.m \mapsto \typingbound, o'.m' \mapsto \Gamma'''; \emptyset; \Gamma''' \vdash \exprbody \substitute{\sf{this}}{o'}\substitute{x}{\sf{getValue(t)}} : T' \dashv \typingfinal$, where $\Gamma''' = \Gamma''[o' \mapsto (o'[C, \U'], \lambda)]$.

From the induction hypothesis we get $\Theta; \emptyset; \Gamma \vdash e' : T \dashv \Gamma''$ as well as $\Theta, o.m \mapsto \typingbound; \emptyset; \Gamma''' \vdash \exprbody \substitute{\sf{this}}{o'}\substitute{x}{\sf{getValue(t)}} : T' \dashv \typingfinal$, and by applying it one more time we get $\Theta ;\emptyset; \Gamma''' \vdash \exprbody \substitute{\sf{this}}{o'}\substitute{x}{\sf{getValue(t)}} : T' \dashv \typingfinal$.

Otherwise if $o' = o$ and $m = m'$ then from (Call-d-rec) we have $\Theta, o.m \mapsto \typingbound; \emptyset; \Gamma \vdash e' : T \dashv \Gamma''$, $\Gamma''(o) = (o[C, \U], \lambda)$, $\U \trans[m] \U'$, $\sf{agree}(t, T)$, $\sf{fun}\ m(x : t) : t' \{\exprbody \} \in D(C).\sf{methods}$ and $\typingbound = \Gamma''[o \mapsto (o[C, \U'], \lambda)]$. From the IH we get $\Theta; \emptyset; \Gamma \vdash e' : T \dashv \Gamma''$.

By inversion we must have added the binding to $\Theta$ in a (Call-d), hence we must have had $\Theta' \emptyset; \Gamma^{(4)} \vdash \exprorig : \typeorig \dashv \typingorig$ where $\Theta' \subseteq \Theta$. From (Call-d) we would then have $\Theta' \emptyset; \Gamma^{(4)} \vdash \exprparorig : T''' \dashv \Gamma^{(5)}$, and $\typingbound = \Gamma^{(5)}[o \mapsto (o[C, \U'], \lambda)$. Due to well-formedness, we know that no expresions can follow a recursive call and that all recursive calls are guarded, hence the resulting typing environment and type of a recursive call must be chosen such that it matches the terminating branch of the body, hence we must have that $T = \typeorig$ and $\typingorig$. So with weakening of $\Theta'$ we can conclude $\Theta; \emptyset; \Gamma \vdash o.m(e') : T \dashv \typingfinal$.

The case for indirect calling is similar

\case{Remaining cases} all remaining cases are trivial or follows directly from the induction hypothesis.
\end{proof}

\section{Proof of Subject Reduction}
\label{app:proof-subject-reduction}

\subjectreduction*

\begin{proof}
Structural induction in $e$.

\case{Comp} Assume $\Gamma \vdash h$, $\emptyset;\emptyset;\Gamma \vdash e;e' : T \dashv \Gamma'$, and $\<h, e;e'\> \ss \<h', e''\>$.

If $\<h, e;e'\> \ss[\varepsilon] \<h', e''\>$ was concluded using rule (Seq) then $e'' = e'$, $e=v$, and $h'=h$. From the rules (Unit), (Enum), (Object), (Null), and (Bool) we see that $\emptyset;\emptyset;\Gamma \vdash v : T' \dashv \Gamma$. We know from our assumptions that $\emptyset;\emptyset;\Gamma \vdash e;e' : T \dashv \Gamma'$, hence from (Comp) we get that $\emptyset;\emptyset; \Gamma \vdash e' : T \dashv \Gamma'$. The last condition is trivial since $\Gamma \trans[\varepsilon] \Gamma$.

If, on the other hand,  $\<h, e;e'\> \ss[\alpha] \<h', e''\>$ was concluded using rule (ctx), then from the outermost evaluation context must have been $e;e' = \ctx[e'''];e'$ and we have $\<h, \ctx[e''']\> \ss[\alpha] \<h', \ctx[e'''']\>$. From our assumption $\emptyset;\emptyset;\Gamma \vdash e;e' : T \dashv \Gamma'$ we get from (Comp) that $\emptyset;\emptyset; \Gamma \vdash \ctx[e'''] : T' \dashv \Gamma''$. We can then, using the induction hypothesis, conclude that $\exists \Gamma'''.\emptyset;\emptyset;\Gamma'''\vdash \ctx[e''''] : T' \dashv \Gamma''$ such that $\Gamma \trans[\alpha] \Gamma'''$. By (Comp) we can finally conclude $\emptyset;\emptyset; \Gamma''' \vdash \ctx[e''''];e' : T \dashv \Gamma'$.

\bigskip

\case{New} Assume $\Gamma \vdash h$, $\emptyset;\emptyset;\Gamma \vdash o.f = \sf{new}\ C : T \dashv \Gamma'$, and $\<h, o.f = \sf{new}\ C\> \ss[\varepsilon] \<h', \sf{unit}\>$. From (New) we know that $T = \sf{void}$. From $\Gamma\vdash h$ we know that $o'$ is fresh for both $\Gamma$ and $h$, and hence it can be chosen as the reference for the new object, in both environments. 

From (New) we get that $\Gamma' = (\Gamma, o' \mapsto (o'[C, \U], C.\sf{fields}.\sf{inittypes}))[o.f \mapsto o']$. From (new) we know that $h'=(h, o' \mapsto (C, C.\sf{fields}.\sf{inittypes}))[o.f \mapsto o']$. Using (Unit) we can conclude $\emptyset;\emptyset;\Gamma' \vdash \sf{unit} : \sf{void} \dashv \Gamma'$. It is clear that $\Gamma \trans[\varepsilon] \Gamma'$ (case new). From the updates we have done to $\Gamma$ and $h$, it is also clear that $\Gamma'\vdash h'$.

\bigskip

\case{Field} Assume $\Gamma \vdash h$, $\emptyset;\emptyset;\Gamma \vdash o.f : T \dashv \Gamma'$, and $\<h, o.f\> \ss[\varepsilon] \<h', v\>$. From (Field) we know that $\Gamma(o).f = \sf{basetype}\ T$ (where $T \in \{\sf{bool}, \sf{void}, \bot\}$ or $\Gamma(o).f = \sf{reference}\ o$. 

We show the case for $v=\sf{true}$, the cases for \sf{false}, \sf{null}, and $l$ are similar. 

We know from $\Gamma \vdash h$ that $\Gamma(o).f = \sf{bool}$ and from (fld) that $h'=h$. Hence from (Field) we get that $T=\sf{bool}$ and $\Gamma'=\Gamma$. Finally from (Bool) we can conclude that $\emptyset;\emptyset;\Gamma \vdash \sf{true} \dashv \Gamma'$.

If $v = o'$, then from (fld) we know that $h(o).f = o'$ and from $\Gamma \vdash h$ that $\Gamma(o') = (o'[C, \U], \lambda)$, hence $T=o'[C, \U]$. But then we can use (Object) to conclude $\emptyset;\emptyset; \Gamma \vdash o' : T \dashv \Gamma'$.

In both cases, we can use (empty) to conclude $\Gamma \trans[\varepsilon] \Gamma'$.

\bigskip

\case{Call-D}
Assume $\Gamma \vdash h$, $\emptyset;\emptyset;\Gamma \vdash o.m(e) : T \dashv \Gamma'$, and $\<h, o.m(e)\> \ss[o.m] \<h', e'\>$. Assume that $e=v$, the other case will be proven afterwards. From (Call-d) we know $\Gamma(o)=(o[C, \U], \lambda)$, $\U \trans[m] \U'$, and $\emptyset, m \mapsto \Gamma;\emptyset;\Gamma[o \mapsto (o[C, \U'], \lambda)] \vdash e\substitute{this}{o}\substitute{x}{v} \dashv \Gamma'$. From (call-d) we know $h' = h$ and $e' = e\substitute{this}{o}\substitute{x}{v}$. From \cref{lemma:recursion_unfolding} we can conclude that $\emptyset;\emptyset;\Gamma[o \mapsto (o[C, \U'], \lambda)] \vdash e\substitute{this}{o}\substitute{x}{v} \dashv \Gamma'$. $\Gamma[o \mapsto (o[C, \U'], \lambda)] \vdash h$ clearly follows from $\Gamma \vdash h$ as only the usage of $o$ is updated in $\Gamma$. $\Gamma \trans[o.m] \Gamma[o \mapsto (o[C, \U'], \lambda)]$ follows from (case call).

Now assume $e \neq v$. Then $\<h, o.m(e)\> \ss[\alpha] \<h', e'\>$ must have been concluded with the (ctx) rule with the outermost evaluation context $o.m(e) = o.m(\ctx[e''])$, hence we have $\<h, \ctx[e'']\> \ss \<h', \ctx[e''']\>$. From (Call-d) we know that $\emptyset;\emptyset;\Gamma \vdash \ctx[e''] : T' \dashv \Gamma''$, hence by our induction hypothesis we know $\exists \Gamma'''.\emptyset;\emptyset;\Gamma''' \vdash \ctx[e'''] : T' \dashv \Gamma''$ such that $\Gamma \trans[\alpha] \Gamma'''$. By (Call-d) we can then conclude that $\emptyset;\emptyset;\Gamma'''\vdash o.m(\ctx[e''']) : T \vdash \Gamma'$.

\bigskip

\case{Assign}

Assume $\Gamma \vdash h$, $\emptyset;\emptyset; \Gamma \vdash o.f = e : \sf{void} \dashv \Gamma'$, and $\<h, o.f=e\>\ss \<h', e'\>$.

If $\<h, o.f=e\>\ss[\varepsilon] \<h', e'\>$ was concluded with (assign) then $e = v$, $e' =\sf{unit}$, and $h'=h[o.f\mapsto v]$. From (Assign) we know that $\emptyset;\emptyset;\Gamma \vdash v : T \dashv \Gamma''$, and $\neg\sf{only}(\Gamma'', o, f)$. From (Unit, Bool, Enum, Const, Obj) we know that $\Gamma'' = \Gamma$. Lastly we know that $\Gamma' = \Gamma[o.f \mapsto \sf{vtype}(T)]$. We see that the updates to $h'$ and $\Gamma'$ match exactly, hence we still have $\Gamma'\vdash h'$, and trivially from rule (Unit) we have $\emptyset;\emptyset; \Gamma' \vdash \sf{unit} : \sf{void} \dashv \Gamma'$. Finally we have that $\Gamma \trans[\varepsilon] \Gamma'$ from (case assign).

If on the other hand $\<h, o.f=e\>\ss \<h', e'\>$ was concluded with (ctx) then $o.f = e$ must be the evaluation context $(o.f = \ctx)[e'']$ and from (ctx) we have $\<h, \ctx[e'']\> \ss[\alpha] \<h', \ctx[e''']\>$. From (Assign) we know $\emptyset;\emptyset;\Gamma \vdash \ctx[e''] : T \dashv \Gamma''$, hence we can use the induction hypothesis to conclude $\exists \Gamma'''.\emptyset;\emptyset;\Gamma''' \vdash \ctx[e'''] : T \dashv \Gamma''$ such that $\Gamma \trans[\alpha] \Gamma'''$, hence using (Assign) we can conclude $\emptyset;\emptyset;\Gamma'''\vdash o.f=\ctx[e'''] : \sf{void} \dashv \Gamma'$.

\bigskip

\case{Case}

Assume $\Gamma \vdash h$, $\emptyset;\emptyset; \Gamma \vdash \sf{match}(e)\{\overline{l: e}\} : T \vdash \Gamma'$, and $\<h, \sf{match}(e)\{\overline{l: e}\}\>\ss \<h', e'\>$.

If $\<h, \sf{match}(e)\{\overline{l: e}\}\> \ss[o.l_j] \<h', e'\>$ was concluded with (match) then we know $e = o.l_j$ and $e'=e_j$ where $\overline{l: e} = l_1 : e_1, l_2 : e_2, \ldots l_i : e_i$ and $1 \leq j \leq i$. From (Case) we have that $\emptyset;\emptyset;\Gamma \vdash o.l_j : L\ \sf{link}\ o' \vdash \Gamma''$, and from (Enum) we know that $o'=o$ and $\Gamma''=\Gamma$. Furthermore, from (Case), we have that $\forall 1 \leq k \leq i. \Gamma(o).\sf{usage} \trans[l_k] \U_k \wedge \emptyset;\emptyset;\Gamma[o.\sf{usage} \mapsto \U_k] \vdash e_k : T \dashv \Gamma'$, hence we have that $\emptyset;\emptyset;\Gamma[o.\sf{usage} \mapsto U_j] \dashv e_j : T \dashv \Gamma'$. We have $\Gamma \trans[o.l_j] \Gamma[o.\sf{usage} \mapsto U_j]$ from (case label).

If, on the other hand, $\<h, \sf{match}(e)\{\overline{l: e}\}\> \ss[\alpha] \<h', e'\>$ was concluded with (ctx) then $\sf{match}(e)\{\overline{l: e}\}=\sf{match}(\ctx)\{\overline{l: e}\}[e'']$, and $\<h, \ctx[e'']\> \ss[\alpha] \<h', \ctx[e''']\>$. From (Case) we know $\emptyset;\emptyset;\Gamma \vdash \ctx[e''] : L\ \sf{link}\ o \dashv \Gamma''$, and since $\Gamma \vdash h$ we can conclude, with the induction hypothesis, that $\exists \Gamma'''.\emptyset;\emptyset;\Gamma'''\vdash \ctx[e'''] : L\ \sf{link}\ o \dashv \Gamma''$ such that $\Gamma \trans[\alpha] \Gamma'''$, hence by (Case) we can conclude $\emptyset;\emptyset;\Gamma''\vdash e' \dashv \Gamma'$.

\bigskip

\case{If}

Assume $\Gamma \vdash h$, $\emptyset;\emptyset;\Gamma \vdash \sif{e}{e_1}{e_2} : T \dashv \Gamma'$, and $\<h, \sif{e}{e_1}{e_2}\> \ss \<h', e'\>$.

If $\<h, \sif{e}{e_1}{e_2}\> \ss[\varepsilon] \<h', e'\>$ was concluded using (if-true) then we know that $e=\sf{true}$, $e' = e_1$, and $h' = h$. Then it follows directly from (If) that $\emptyset;\emptyset;\Gamma \vdash e_1 : T \dashv \Gamma'$. Using (empty) we can conclude $\Gamma \trans[\varepsilon] \Gamma$ The case for (if-false) is similar.

If $\<h, \sif{e}{e_1}{e_2}\> \ss[\alpha] \<h', e'\>$ was concluded using (ctx) then we must have $\sif{e}{e_1}{e_2} = \sif{\ctx}{e_1}{e_2}[e'']$ and $\<h, \ctx[e'']\> \ss \<h', \ctx[e''']\>$. From (If) we know $\emptyset;\emptyset;\Gamma \vdash \ctx[e''] : \sf{bool} \dashv \Gamma''$, and from the induction hypothesis we can conclude $\exists  \Gamma'''.\emptyset;\emptyset;\Gamma''' \vdash \ctx[e'''] : \sf{bool} \dashv \Gamma''$ such that $\Gamma \trans[\alpha] \Gamma'''$. Finally, using (If) we can conclude that $\emptyset;\emptyset;\Gamma''' \vdash \sif{\ctx}{e_1}{e_2}[e'''] : T \dashv \Gamma'$.

\bigskip

\case{Label}

Assume $\Gamma \vdash h$, $\emptyset;\emptyset;\Gamma \vdash k : e : \sf{void} \dashv \Gamma'$, and $\<h, k : e\> \ss \<h', e'\>$.

We know that $\<h, k : e\> \ss \<h', e'\>$ must have been concluded with (lab) hence we know that $e'=\substitute{\sf{continue}\ k}{k : e}$ and $h'=h$. The remainder of this case follows from \cref{lemma:unfolding}.

\end{proof}
\section{Proof of Progress}
\label{app:proof-progess}
\progress* 

\begin{proof}

Structural induction in $e$.

\case{Assignment}

Assume $\Theta;\Omega;\Gamma \vdash o.f = e : \sf{void} \dashv \Gamma'$ and $\Gamma \vdash h$. Show that $\<h, o.f = e\> \ss \<h', e'\>$.

If $e = v$ then we can trivially conclude with (assign) that $\<h, o.f=v\> \ss \<h[o.f\mapsto v], \sf{unit}\>$.

Otherwise if $e \neq v$ then from (Assign) we know $\Theta;\Omega;\Gamma \vdash e : T \dashv \Gamma''$, hence by the induction hypothesis we have $\<h, e\> \ss \<h'', e''\>$ and since $(o.f = e) = (o.f = [\_])[e]$ we can use (ctx) to conclude $\<h, o.f = e\> \ss \<h'', o.f = e''>$.

\bigskip 

\case{New}

Assume $\Theta;\Omega;\Gamma \vdash o.f = \sf{new}\ C : \sf{unit} \dashv \Gamma'$ and $\Gamma \vdash h$. With (new) we conclude directly that $\<h, o.f = \sf{new}\ C\> \ss \<h', \sf{unit}\>$ where $o'$ is fresh and $h' = (h, o' \mapsto (C, C.\sf{fields}.\sf{initvals}))[o.f\mapsto o']$.

\bigskip 

\case{Seq}

Assume $\Theta;\Omega;\Gamma \vdash e;e' : T \dashv \Gamma'$ and $\Gamma \vdash h$.

If $e = v$ then with (seq) we conclude $\<h, v;e'\> \ss \<h, e'\>$.

Otherwise; from (Comp) we know $\Theta;\Omega;\Gamma \vdash e : T' \dashv \Gamma''$, hence by the induction hypothesis we have $\<h, e\> \ss \<h', e''\>$. Since $e;e' = ([\_];e')[e]$ we can use (ctx) to conclude $\<h, e;e'\> \ss \<h', e'';e'\>$.

\bigskip 

\case{Call-d}

Assume $\Theta;\Omega;\Gamma \vdash o.m(e) : T \dashv \Gamma'$ and $\Gamma \vdash h$.

If $e = v$ then from (Call-d) we know $\Gamma(o).\sf{class} = C$ and $\sf{fun}\ m(x : t) : t' \{e'\} \in C.\sf{methods}$. From $\Gamma \vdash h$ we know that $h(o).\sf{class} = C$, hence we can conclude with (call-d) that $\<h, o.m(v)\> \ss \<h, e'\substitute{\sf{this}}{o}\substitute{x}{v}\>$.

Otherwise from (Call-d) we know $\Theta;\Omega;\Gamma \vdash e : T' \dashv \Gamma''$. With the induction hypothesis we can conclude $\<h, e\> \ss \<h', e''\>$. Finally, since $o.m(e) = (o.m([\_]))[e]$, we can conclude with (ctx) that $\<h, o.m(e)\> \ss \<h', o.m(e'')\>$.

\bigskip 

\case{If}

Assume $\Theta;\Omega;\Gamma \vdash \sif{e}{e'}{e''} : T \dashv \Gamma'$ and $\Gamma \vdash h$.

If $e = v$ then from (If) we know $\Theta;\Omega;\Gamma \vdash v : \sf{bool} \dashv \Gamma''$. This must have been concluded with (Bool), hence we know $v \in \{\sf{true}, \sf{false}\}$. If $v = \sf{true}$ then we can conclude, using (if-true), $\<h, \sif{\sf{true}}{e'}{e''}\>\ss \<h, e'\>$. If $v = \sf{false}$ then we can conclude, using (if-false), $\<h, \sif{\sf{false}}{e'}{e''}\>\ss \<h, e''\>$.

If, on the other hand, $e \neq v$, then from (If) we know $\Theta;\Omega;\Gamma \vdash e : \sf{bool} \dashv \Gamma''$. From the induction hypothesis we get $\<h, e\> \ss \<h', e'''\>$. Since $\sif{e}{e'}{e''} = (\sif{[\_]}{e'}{e''})[e]$ we can use (ctx) to conclude $\<h, \sif{e}{e'}{e''}\>\ss\<h', \sif{e'''}{e'}{e''}\>$.

\bigskip 

\case{Match}

Assume $\Theta;\Omega;\Gamma \vdash \sf{match}(e)\{\overline{l : e}\} : T \dashv \Gamma'$ and $\Gamma \vdash h$.

If $e = v$ then from (Match) we know $\Theta;\Omega;\Gamma \vdash v : L\ \sf{link}\ o$ which must have been concluded with (Enum) hence $v = o.l_j$. Furthermore we know $\overline{l : e} = l_1 : e_1, \ldots l_i: e_i$ and $1 \leq j \leq i$. Using (match) we can conclude $\<h, \sf{match}(o.l_j)\{\overline{l : e}\}\> \ss \<h, e_j\>$.

Otherwise from (Match) we know $\Theta;\Omega;\Gamma \vdash e : L\ \sf{link}\ o \dashv \Gamma'$. By the induction hypothesis we know $\<h, e\> \ss \<h', e'\>$. Since $\sf{match}(e)\{\overline{l : e}\} = (\sf{match}([\_])\{\overline{l : e}\})[e]$ we can use (ctx) to conclude $\<h, \sf{match}(e)\{\overline{l : e}\}\> \ss \<h', \sf{match}(e')\{\overline{l : e}\}\>$.

\bigskip 

\case{Label}

Assume $\Theta;\Omega;\Gamma \vdash k : e : T \dashv \Gamma'$ and $\Gamma \vdash h$. Here we directly conclude using (lab) that $\<h, k : e\>\ss\<h, e\substitute{\sf{continue}\ k}{k: e}\>$.

\end{proof}

\end{document}